\documentclass[acmsmall]{acmart}

\usepackage{caption}
\usepackage{amsmath,amsfonts}
\usepackage{amsthm}
\usepackage{algorithmic}
\usepackage{graphicx}
\usepackage{textcomp}
\usepackage{subfigure}
\usepackage{booktabs}
\usepackage{xcolor}
\usepackage{multirow}
\usepackage{array}
\usepackage{threeparttable}
\usepackage[ruled,vlined]{algorithm2e}
\usepackage{balance}
\usepackage{lipsum}
\usepackage{csquotes}
\usepackage{makecell}

\usepackage{amssymb}
\usepackage[misc]{ifsym} 
\usepackage{tikz}
\usetikzlibrary{patterns,decorations.pathreplacing,positioning,shadows,calc,3d}

\theoremstyle{definition}
\newtheorem{definition}{Definition}

\theoremstyle{lemma}
\newtheorem{lemma}{Lemma}

\theoremstyle{theorem}
\newtheorem{theorem}{Theorem}

\AtBeginDocument{%
  }

\setcopyright{cc}
\setcctype{by}
\acmJournal{PACMMOD}
\acmYear{2026} \acmVolume{4} \acmNumber{3 (SIGMOD)} \acmArticle{241}
\acmMonth{6} \acmDOI{10.1145/3802118}

\begin{document}

%%
%% The "title" command has an optional parameter,
%% allowing the author to define a "short title" to be used in page headers.
\title{TaCo: Data-adaptive and Query-aware Subspace Collision for High-dimensional Approximate Nearest Neighbor Search}

%%
%% The "author" command and its associated commands are used to define
%% the authors and their affiliations.
%% Of note is the shared affiliation of the first two authors, and the
%% "authornote" and "authornotemark" commands
%% used to denote shared contribution to the research.

\author{Jiuqi Wei}
\affiliation{%
  \institution{Oceanbase, Ant Group}
  \country{China}
}
\email{weijiuqi.wjq@antgroup.com}

\author{Zhenyu Liao}
\affiliation{%
  \institution{Huazhong University of Science and Technology}
  \country{China}
}
\email{zhenyu_liao@hust.edu.cn}

\author{Ruoyu Han}
\affiliation{%
  \institution{Institute of Computing Technology, Chinese Academy of Sciences}
  \country{China}
}
\email{hanruoyu23z@ict.ac.cn}

\author{Quanqing Xu}
\affiliation{%
  \institution{Oceanbase, Ant Group}
  \country{China}
}
\email{xuquanqing.xqq@oceanbase.com}

\author{Chuanhui Yang}
\affiliation{%
  \institution{Oceanbase, Ant Group}
  \country{China}
}
\email{rizhao.ych@oceanbase.com}
\thanks{$^\ast$Chuanhui Yang is the corresponding author.}

\author{Themis Palpanas}
\affiliation{%
  \institution{LIPADE, Universit{\'e} Paris Cit{\'e}}
  \country{France}
}
\email{themis@mi.parisdescartes.fr} 

%%
%% By default, the full list of authors will be used in the page
%% headers. Often, this list is too long, and will overlap
%% other information printed in the page headers. This command allows
%% the author to define a more concise list
%% of authors' names for this purpose.
% \renewcommand{\shortauthors}{Trovato et al.}

\renewcommand{\shortauthors}{Jiuqi Wei et al.}
%% No italics, no superscripts
%% If needed use a foot or author note to identify equal contribution

%%
%% The abstract is a short summary of the work to be presented in the
%% article.
\begin{abstract}
Approximate Nearest Neighbor Search (ANNS) in high-dimensional Euclidean spaces is a fundamental problem with broad applications.
\emph{Subspace Collision} is a newly proposed ANNS framework that provides a novel paradigm for similarity search and achieves superior indexing and query performance.
However, the subspace collision framework remains data-agnostic and query-oblivious, resulting in imbalanced index construction and wasted query overhead.
In this paper, we address these limitations from two aspects: first, we design a subspace-oriented data transformation mechanism by averaging the entropies computed over each subspace of the transformed data, which ensures balanced subspace partitioning (in an information theoretical sense) and enables \emph{data-adaptive} subspace collision; second, we present \emph{query-aware} and scalable query strategies that dynamically allocate overhead for each query and accelerate collision probing within subspaces.
Building on these ideas, we propose a novel data-adap\textbf{\underline{t}}ive and query-\textbf{\underline{a}}ware subspa\textbf{\underline{c}}e c\textbf{\underline{o}}llision method, abbreviated as TaCo, which achieves efficient and accurate ANN search while maintaining an excellent balance between indexing and query performance.
Extensive experiments on real-world datasets demonstrate that, when compared to state-of-the-art subspace collision methods, TaCo achieves up to $\mathbf{8\times}$ speedup in indexing and reduces to $\mathbf{0.6\times}$ memory footprint, while achieving over $\mathbf{1.5\times}$ query throughput.
Moreover, TaCo achieves state-of-the-art indexing performance and provides an effective balance between indexing and query efficiency, even when compared with advanced methods beyond the subspace-collision paradigm.
\end{abstract}

%%
%% The code below is generated by the tool at http://dl.acm.org/ccs.cfm.
%% Please copy and paste the code instead of the example below.
%%
\begin{CCSXML}
<ccs2012>
<concept>
       <concept_id>10002951.10003227.10003351.10003445</concept_id>
       <concept_desc>Information systems~Nearest-neighbor search</concept_desc>
       <concept_significance>500</concept_significance>
       </concept>
   <concept>
       <concept_id>10002951.10002952.10003190.10003192.10003210</concept_id>
       <concept_desc>Information systems~Query optimization</concept_desc>
       <concept_significance>500</concept_significance>
       </concept>
   
 </ccs2012>
\end{CCSXML}

\ccsdesc[500]{Information systems~Nearest-neighbor search}
\ccsdesc[500]{Information systems~Query optimization}

%%
%% Keywords. The author(s) should pick words that accurately describe
%% the work being presented. Separate the keywords with commas.
\keywords{Subspace collision, ANN search, High-dimensional spaces}

\received{October 2025}
\received[revised]{January 2026}
\received[accepted]{February 2026}

%%
%% This command processes the author and affiliation and title
%% information and builds the first part of the formatted document.
\maketitle

% \blfootnote{$^\ast$ Chuanhui Yang is the corresponding author.}

\section{Introduction}
\textbf{Background and Problem.} 
Nearest Neighbor Search (NNS) on high-dimensional vectors is a fundamental problem for numerous applications, such as recommendation systems~\cite{schafer2007collaborative}, information retrieval~\cite{karpukhin2020dense}, and data mining~\cite{tagami2017annexml}.
However, NNS in high-dimensional spaces is challenging due to the \emph{curse of dimensionality} phenomenon~\cite{hinneburg2000nearest, borodin1999lower,louart2018random,liao2021random,couillet2022RMT4ML,liao2025Randoma}. 
In practice, Approximate Nearest Neighbor Search (ANNS) is often used as an alternative approach, achieving huge efficiency gains by sacrificing some query accuracy~\cite{zeyubulletin-sep23,li2019approximate,annbulletin}. 
With the development of Large Language Models (LLMs)~\cite{zhao2023survey}, ANNS has received a lot of attention. 
For example, Retrieval Augmented Generation (RAG) leverages ANNS to add context to an LLM query~\cite{lewis2020retrieval,fan2024survey,gao2023retrieval}, and Key-Value Cache (KVCache) adopts ANNS to accelerate the model inference process~\cite{zhang2025pqcache,liu2024retrievalattentionacceleratinglongcontextllm,desaihashattention}.
Depending on the storage medium of indexes and datasets, ANNS methods can be divided into in-memory methods~\cite{malkov2018efficient,detlsh,ge2013optimized,gao2024rabitq} and memory-disk hybrid methods~\cite{jayaram2019diskann,chen2021spann,wang2024starling}.
In this paper, we focus on in-memory methods due to their widespread application.

\textbf{Prior Work.}
Numerous excellent ANNS methods have been developed, falling into four categories:
locality-sensitive hashing (LSH)-based methods~\cite{dblsh,lccslsh,pmlsh,andoni2015optimal,detlsh,pdetlsh}, vector quantization (VQ)-based methods ~\cite{jegou2010product,ge2013optimized,norouzi2013cartesian,babenko2014inverted,gao2024rabitq,gao2025practical},
tree-based methods ~\cite{annoy,coconut,messi,dumpy,seanetconf,leafi}, and graph-based methods ~\cite{fu2019fast,malkov2018efficient,voruganti2025mirage,lshapg,azizi2023elpis,gou2025symphonyqg}.
Although each category of methods exhibits distinct strengths, they commonly face inherent limitations that restrict their ability to simultaneously perform well in index construction and query answering~\cite{li2019approximate}.
\emph{Subspace Collision} is a newly proposed ANNS framework~\cite{wei2025subspace}.
% Unlike traditional methods (LSH-, VQ-, tree-, and graph-based), subspace collision works in a completely new manner.
Within the subspace collision framework, the original high-dimensional space is first partitioned into $N_s$ subspaces. 
Subsequently, collisions between the query point and data points are counted within each subspace. 
A data point is considered to \emph{collide} with a query in a subspace if it ranks among the points with the closest distance (within that subspace) to that query. %(predefined fraction) data points to that query.
% A \emph{collision} is defined as the preset proportion of data points closest to the query point within a subspace.
For any data point, its number of collisions in all subspaces is defined as the \emph{SC-score} metric, an integer value ranging from 0 to $N_s$. 
Empirical analysis in various datasets shows that the SC-score metric adheres to the \emph{Pareto principle} (also known as the 80-20 rule), which makes it an efficient proxy for the Euclidean distance between the query point and data points.
\emph{SuCo}~\cite{wei2025subspace} is the first approach designed for the subspace collision framework.
The corresponding experiments demonstrate that SuCo achieves state-of-the-art performance.
% LSH-based methods provide robust theoretical guarantees regarding result quality, but lead to significant query latency and high memory overhead~\cite{detlsh,dblsh}.
% VQ-based methods employ quantization codebooks to compress data, achieving significant memory reduction. However, their lossy compression typically compromises the search accuracy, thus requiring a lot of indexing time to obtain fine-grained indexes ~\cite{ge2013optimized,gao2024rabitq}.
% Tree-based methods partition data points through recursive node splitting based on data locality, yet their effectiveness degrades as the space dimensionality increases, thereby constraining query performance~\cite{bohm2000cost, weber1998quantitative}.
% Graph-based methods typically achieve superior query performance, but require significantly more time to construct indexes and consume greater memory~\cite{hydra2,li2019approximate}.
% This is because during index building, they have to find appropriate neighbors for each data point and connect them, which is a computationally intensive process.
% In contrast, query processing simply follows a converging path through these connections, enabling efficient search.
% Consequently, it is difficult for traditional ANNS methods to perform well in both index construction and query answering, while providing theoretical guarantees for answer quality.

\textbf{Limitations and Motivation.}
Although the subspace collision framework fits the ANNS tasks well and SuCo shows superior performance, we  identify some limitations that need to be fixed. 
(1) The subspace collision framework employs a naive, \emph{data-agnostic} subspace partitioning mechanism: uniformly dividing the original $d$-dimensional space into $N_s$ fixed-size subspaces of $s=\lfloor \frac{d}{N_s} \rfloor$ dimensions. 
This mechanism disregards inherent data distributions, resulting in subspaces with imbalanced statistical properties. 
However, the SC-score metric still assigns equal weight to all subspaces during computation despite this imbalance.
Consequently, designing a \emph{data-adaptive} subspace partitioning mechanism becomes essential to mitigate errors arising from data skew across subspaces.
(2) The query strategies of the subspace collision framework are \emph{query-oblivious}, employing the same query overhead for all queries.
However, empirical studies confirm significant performance variance across queries in ANNS tasks~\cite{li2020improving,wang2024boldsymbol}.
This rule also applies to the subspace collision framework.
For instance, some queries demonstrate highly discriminative SC-score distributions, thus high recall can be achieved by re-ranking only the top-scoring data points. 
In contrast, other queries exhibit poor SC-score discriminability, requiring larger candidate sets during re-ranking to achieve comparable recall. 
Consequently, it is necessary to design \emph{query-aware} query answering strategies for the subspace collision framework to dynamically adjust to the query overhead.
(3) Although SuCo's inverted multi-index (IMI) demonstrates competitive performance, the dynamic activation algorithm designed for IMI is not scalable enough.
By maintaining the activation list via a linear array, the \emph{Dynamic Activation} algorithm~\cite{wei2025subspace} incurs linear query complexity.
% While this design performs satisfactorily with a small length of the IMI list, it becomes a bottleneck when an increased list length is required for higher indexing precision.
% The linear query complexity constrains overall efficiency under such conditions.
While the linear design performs satisfactorily with a small length of the IMI list, it becomes an efficiency bottleneck when a large list length is required for higher indexing precision.
Thus, designing a more \emph{scalable} algorithm for IMI index structure becomes essential.

\textbf{Our Method.}
In this paper, we propose a \emph{data-adaptive} and \emph{query-aware} subspace collision framework for ANNS and design a novel method, named TaCo, which exhibits superior performance in both index construction and query processing compared to state-of-the-art ANNS methods.
\underline{First}, we analyze the data-agnostic issue of the original subspace collision framework, which fails to account for inherent data distributions, resulting in subspaces with imbalanced statistical properties.
To address this issue, we formulate an \emph{entropy-averaging optimization problem} and show that it can be solved efficiently, which leads to the design of a \emph{subspace-oriented data transformation} mechanism. 
This transformation makes the subspace collision framework \emph{data-adaptive} and enables more accurate subspace partitioning.
Furthermore, the transformation reduces dimensionality by 40\% to 96\% %95.83\% 
across different datasets, leading to significant improvements in both indexing and query efficiency.
\underline{Second}, we analyze the query-oblivious issue of the original subspace collision framework, i.e., using the same strategy for all queries results in wasted query overhead.
To address this issue, we design a \emph{query-aware candidate selection} mechanism that leverages the distribution of SC-scores (an intermediate query state) to \emph{dynamically} adjust the query overhead for each individual query.
This mechanism makes the subspace collision framework \emph{query-aware} and improves the query efficiency.
\underline{Third}, we analyze the scalability issue of the query algorithm for inverted multi-index (IMI) structure, revealing that its performance under fine-grained indexes is constrained by the linear query complexity.
To address this issue, we design a \emph{scalable dynamic activation} algorithm that employs a min-heap structure to reduce the query complexity to $\mathcal{O}(1)$.
This advancement supports efficient querying even with highly fine-grained indexes.
\underline{Fourth}, integrating the above designs, we propose a data-adap\textbf{\underline{t}}ive and query-\textbf{\underline{a}}ware subspa\textbf{\underline{c}}e c\textbf{\underline{o}}llision method, abbreviated as TaCo.
To systematically evaluate the proposed optimization techniques, we also present three ablation methods based on SuCo, where SuCo~\cite{wei2025subspace} is the state-of-the-art subspace collision-based method.
\underline{Fifth}, we conduct extensive experiments on real-world datasets.
Empirical results show that when compared with subspace collision-based methods, TaCo achieves up to $\mathbf{8\times}$ speedups in indexing while consuming only $\mathbf{0.6\times}$ memory.
In terms of query performance, TaCo achieves over $\mathbf{1.5\times}$ query throughput at equivalent high recall. 
% Moreover, TaCo achieves top performance and maintains a strong balance between index construction and query processing, even when compared to non-subspace collision-based methods. 
Moreover, TaCo achieves state-of-the-art indexing performance and provides an effective balance between indexing and query efficiency, even when compared with advanced methods beyond the subspace-collision paradigm.

Our main contributions are summarized as follows.

\begin{itemize} 
        \item We design a subspace-oriented data transformation mechanism by formulating an entropy-averaging optimization problem and showing that it can be solved efficiently.
        This approach achieves balanced subspace partitioning with impressive dimensionality reduction, thereby enhancing the subspace collision framework with data adaptability.
        \item We present query-aware and scalable query strategies that dynamically adjusts the overhead for each individual query and accelerate the collision probing within each subspace, thereby enhancing the subspace collision framework with query awareness and scalability.
        \item We propose a data-adaptive and query-aware subspace collision method named TaCo, which supports efficient and accurate ANN search while maintaining the excellent balance between indexing and query performance inherent to subspace collision-based methods.
        \item We conduct extensive experiments demonstrating that when compared to state-of-the-art subspace collision methods, TaCo achieves up to $\mathbf{8\times}$ speedup in indexing and reduces to $\mathbf{0.6\times}$ memory footprint, while achieving over $\mathbf{1.5\times}$ query throughput.
        Moreover, TaCo achieves state-of-the-art indexing performance and provides an effective balance between indexing and query efficiency, even when compared with advanced methods beyond the subspace-collision paradigm.
\end{itemize}

\section{Preliminaries}

\subsection{Problem Definition}

\begin{definition}[Nearest Neighbor Search, NNS]
Given a dataset $\mathcal D$ of $n$ data points in $d$-dimensional Euclidean space $\mathbb{R}^d$ and a query $q \in \mathbb{R}^d$. 
An NNS returns a point $o^* \in \mathcal D$ which has the minimum Euclidean distance to $q$ among all points in $\mathcal D$, i.e., for each $o \in D$ satisfying $\left\|q,o^*\right\| \leq \left\|q,o\right\|$.
\end{definition} 

\begin{definition}[$k$-Nearest Neighbor Search, $k$-NNS]
Given a dataset $\mathcal D$ of $n$ data points in $d$-dimensional Euclidean space $\mathbb{R}^d$, a query $q \in \mathbb{R}^d$, and an integer $k$.
Let $o^*_i$ be the $i$-th exact nearest neighbor of $q$ in $\mathcal D$, and $\mathcal B=\{o^*_1,o^*_2,\ldots,o^*_k\}$.
A $k$-NNS returns $k$ points $\mathcal R=\{o_1,o_2,\ldots,o_k\}$, satisfying $\mathcal R=\mathcal B$.
\end{definition} 

\begin{definition}[$k$-Approximate Nearest Neighbor Search, $k$-ANNS]
Given a dataset $\mathcal D$ of $n$ data points in $d$-dimensional Euclidean space $\mathbb{R}^d$, a query $q \in \mathbb{R}^d$, an approximation ratio $c > 1$, and an integer $k$.
Let $o^*_i$ be the $i$-th exact nearest neighbor of $q$ in $\mathcal D$.
A $k$-ANNS returns $k$ points $o_1,o_2,\ldots,o_k$.
For each $o_i \in D$ satisfying $\left\|q,o_i\right\| \leq c \cdot \left\|q,o^*_i\right\|$, where $i \in [1,k]$.
\end{definition} 
% In practice, $k$-ANNS implementations typically do not have an explicit approximation ratio $c$ to characterize the quality of the results (except for LSH-based methods~\cite{detlsh}). 
In practice, many $k$-ANNS implementations do not explicitly rely on an approximation ratio $c$ during query processing; instead, they constrain the quality of the results through evaluation metrics~\cite{wei2025subspace,malkov2018efficient}.
Let $\mathcal R^* = \{o^*_1, o^*_2,\ldots,o^*_k\}$ be the ground truth set of exact $k$ nearest neighbors for query $q$ in dataset $\mathcal D$, and $\mathcal R= \{o_1, o_2,\ldots,o_k\}$ represent the returned result set.
$k$-ANNS seeks to maximize $recall=\frac{\left| \mathcal R \cap \mathcal R^* \right|}{k}$ to obtain higher quality results.
% Higher $recall$ directly reflects higher result quality.

\begin{table}
% \vspace*{-0.2cm}
% \small
% \footnotesize
	\centering
	\caption{Notations}
 % \vspace*{-0.3cm}
	\label{table1}
 {
	\begin{tabular}{cc}
		\toprule
		\textbf{Notation} & \textbf{Description} \\
		\midrule
		$\mathbb{R}^d$ & $d$-dimensional Euclidean space \\
		$\mathcal D$ & Dataset of points in $\mathbb{R}^d$ \\
		$n$ & Dataset cardinality  $\lvert \mathcal D \rvert$ \\
		$o, q$ & A data point in $\mathcal D$ and a query point in $\mathbb{R}^d$ \\
		$o^*_i$ & The $i$-th nearest data point to $q$ in $\mathcal D$ \\
        $o_i$ & The $i$-th data point in $\mathcal D$ \\
		$\left\|o_1, o_2\right\|$ & The Euclidean distance between $o_1$ and $o_2$ \\
        $o^{\prime},q^{\prime}$ & $o$ and $q$ in a subspace \\
        $N_s$ & Number of subspaces\\ 
        $s$ & Dimension of each subspace \\
        $S_i, \mathcal D_i$ & The $i$-th subspace and all data points in $S_i$ \\
        $o^i_j$, $q^i$ & The $j$-th data point and $q$ in the $i$-th subspace \\
        $\alpha$ & Collision ratio \\
		$\beta$ & Re-rank ratio \\
        $K, t$ & Number of K-means clusters and iterations\\
		\bottomrule
	\end{tabular}
 % \vspace*{-0.2cm}
 } % font size
\end{table}

\subsection{Subspace Collision Framework} \label{pre_subcollision}

\emph{Subspace Collision} is a newly proposed ANNS framework~\cite{wei2025subspace}.
Specifically, subspace collision is inspired by an intuition: two data points that are close in the high-dimensional original space are also more likely to be close in a common subspace. 
However, the distances between data points are not uniformly distributed across dimensions, which can cause the above intuition to fail.
To overcome this problem, the authors proposed the definitions of \emph{subspace sampling}, \emph{subspace collision}, and \emph{SC-score}.

\begin{definition}[Subspace Sampling]\label{def_subspace_sampling}
	Given a dataset $\mathcal D$ of $n$ data points in $d$-dimensional space, we adopt a multi-round sampling strategy to obtain $N_s$ subspaces.
 In round $i$, a number of $s=\lfloor \frac{d}{N_s} \rfloor$ dimensions are uniformly sampled without replacement to form a subspace $S_i$, $i=1,2,\ldots,N_s-1$. For the last subspace $S_{N_s}$, it simply pick up all remaining dimensions.
\end{definition}

\begin{definition}[Subspace Collision]\label{def_subspace_collision}
	Given a dataset $\mathcal D$ of $n$ data points in $d$-dimensional space, a query $q \in \mathbb{R}^d$, and a collision ratio $\alpha \in \left(0, 1\right)$. 
 We randomly select $s$ dimensions from all $d$ dimensions as a subspace $\mathbb{R}^s$ ($s<d$).
 The dataset $\mathcal D$, the data point $o$, and the query point $q$ in this subspace are denoted as $\mathcal D^\prime$, $o^\prime$, and $q^\prime$. If a point $o \in \mathcal D$ satisfies: $o^\prime$ is one of the $(\alpha \cdot n)$-NNs of $q^\prime$ in $\mathcal D^\prime$, we say that $o$ collides with $q$ in the subspace $\mathbb{R}^s$.
\end{definition} 

\begin{definition}[SC-score]\label{def_similarity}
	Given a dataset $\mathcal D$ of $n$ data points in $d$-dimensional space, a query $q \in \mathbb{R}^d$, $N_s$ $s$-dimensional subspaces, a collision ratio $\alpha \in \left(0, 1\right)$. Probe collisions of $q$ in $N_s$ subspaces with the collision ratio $\alpha$.
 For a data point $o \in \mathcal D$, its SC-score is the number of subspaces where it collides with $q$. Therefore, SC-score is an integer in $\left[0, N_s\right]$.
\end{definition}

\begin{figure} [tb]
% \vspace*{-0.2cm}
	% \flushleft
        \subfigcapskip=5pt
        \subfigure[DEEP1M: $N_s$=8, $\alpha$=0.05]{
		\includegraphics[width=0.32\linewidth]{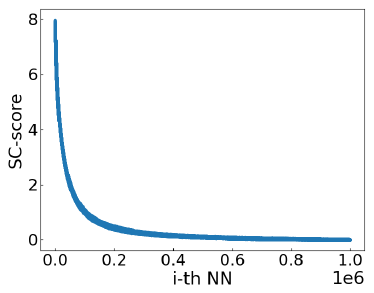}
		\label{deep_1m}}\hspace{2mm}
	\subfigure[GIST1M: $N_s$=8, $\alpha$=0.05]{
		\includegraphics[width=0.32\linewidth]{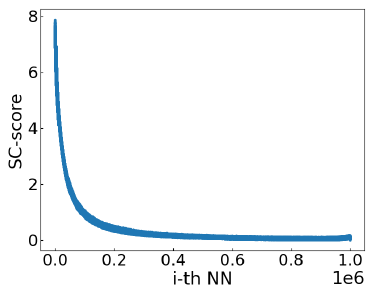}
		\label{gist1m_scscore}}\hspace{2mm}
	\subfigure[Yandex DEEP10M: $N_s$=8, $\alpha$=0.1]{
		\includegraphics[width=0.32\linewidth]{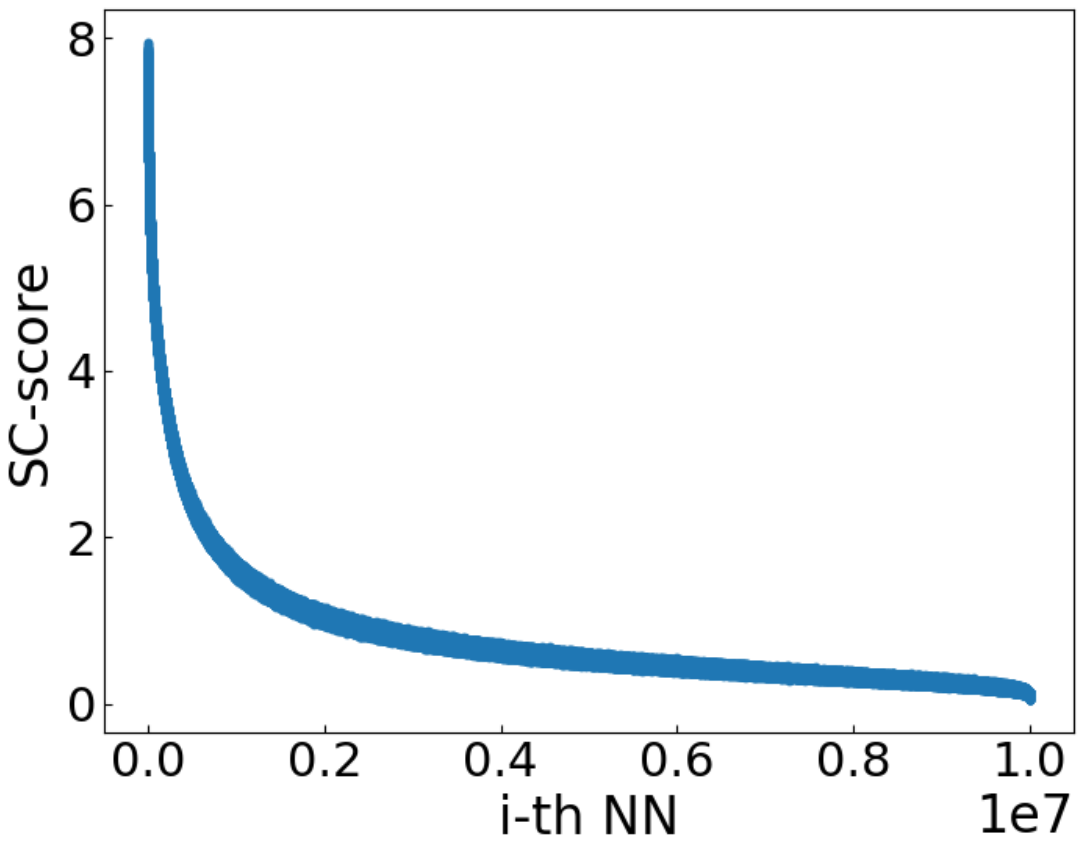}
		\label{deep10m_scscore}}\hspace{2mm}
        \subfigure[SPACEV10M: $N_s$=10, $\alpha$=0.1]{
		\includegraphics[width=0.32\linewidth]{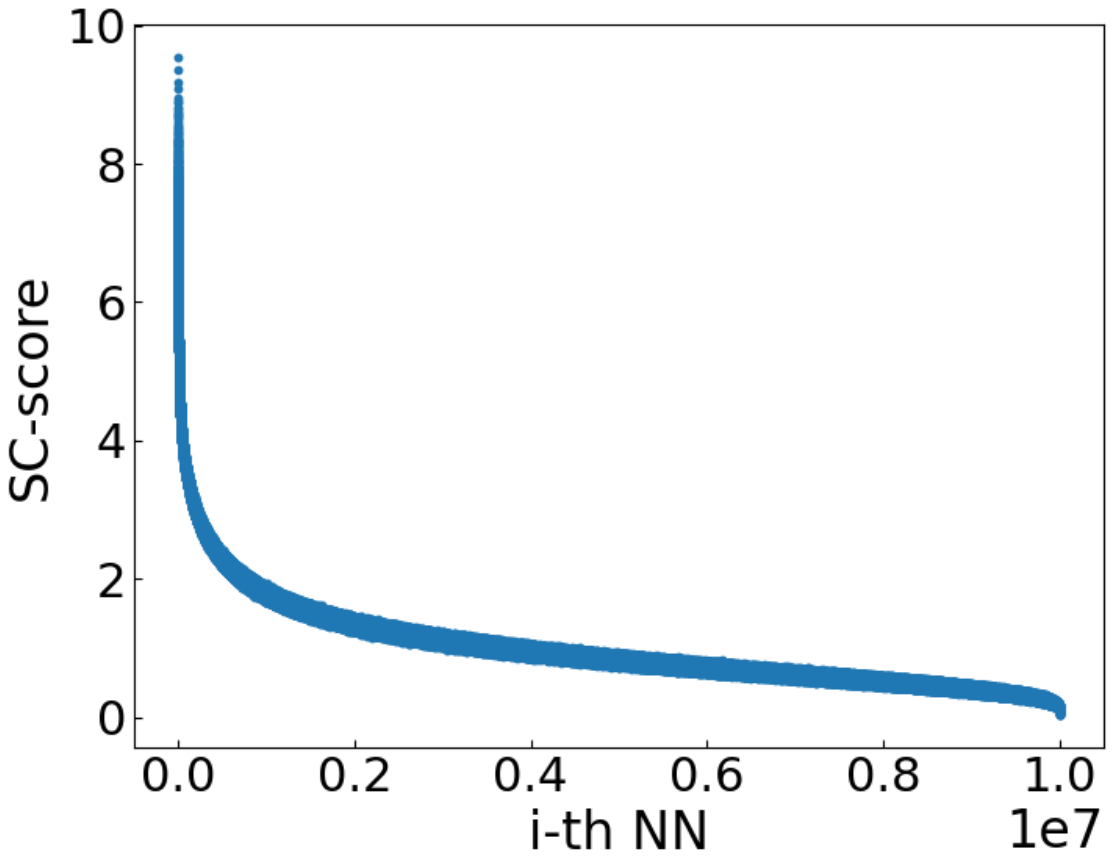}
		\label{spacev10m_scscore}}
  % \vspace*{-0.2cm}
	\caption{\enquote{Pareto principle} of SC-score.}
	\label{scscore}
 % \vspace*{-0.2cm}
\end{figure}

Subspace sampling (Def.~\ref{def_subspace_sampling}) provides a strategy for subspace partitioning. 
In practice, for convenience, the $d$-dimensional high-dimensional space is typically partitioned uniformly into $N_s$ subspaces with equal dimensions $s=\lfloor \frac{d}{N_s} \rfloor$.
Subspace collision (Def.~\ref{def_subspace_collision}) quantifies the similarity between data points and the query point within a single subspace, mitigating the influence of intra-subspace distance rankings on final results. 
Conversely, the SC-score (Def.~\ref{def_similarity}) integrates collision outcomes across multiple subspaces to alleviate significant errors potentially introduced by any single subspace.

An exciting observation is that SC-score follows the \enquote{Pareto principle} (also known as the 80-20 rule), which makes it an efficient proxy for the Euclidean distance. 
Specifically, as shown in Figure~\ref{scscore}, the top 20\% of data points nearest to the query exhibit discriminatively high SC-scores, while the remaining 80\% of the data points have low SC-scores with poor discriminability.
This phenomenon enables us to selectively process high-scoring data points to obtain high-recall ANNS results with minimal computational overhead.

% \begin{algorithm}[tb]
% % \small
% % \footnotesize
% 	\caption{SC-Linear}                                                                           
% 	\label{SC-Linear}
% 	\LinesNumbered
% 	\KwIn{Dataset $\mathcal D$, dataset size $n$, data dimensionality $d$, query point $q$, number of results $k$, subspace number $N_s$, collision ratio $\alpha$, re-rank ratio $\beta$}
% 	\KwOut{$k$ nearest points to $q$ in $\mathcal D$}
%         Initialize an array $SC\_scores$ of length $n$ and set all elements to 0; \\
% 	Divide the $d$-dimensional space into $N_s$ subspaces: $S_1, S_2,\ldots, S_{N_s}$; \\
%         Divide all data points $o_1,\ldots,o_n$ and $q$ into $N_s$ subspaces; \\
% 	\For{$i=1$ to $N_s$}{
%             \For{$j=1$ to $n$}{
%                 Calculate the Euclidean distance between $o^i_j$ and $q^i$; \\
%             }
%             Sort $o^i_1,\ldots,o^i_n$ by their distances to $q^i$ in $S_i$; \\
%             \For{$z=1$ to $\alpha \cdot n$}{
%                 Select the $z$-th-nearest point to $q^i$ whose id in $\mathcal D$ is $t$; \\
%                 $SC\_scores[t]$++; \\
%             }
% 	}
%         Sort $SC\_scores$ in descending order; \\
%         \For{$z=1$ to $\beta \cdot n$}{
%             Select the point with the $z$-th-largest SC-score in $SC\_scores$ whose id in $\mathcal D$ is $t$; \\
%             Calculate the Euclidean distance between $o_t$ and $q$; \\
%         }
% 	\Return the \emph{top}-$k$ points closest to $q$ in the $\beta \cdot n$ candidates; \\
% \end{algorithm}

\subsection{SC-Linear and SuCo Methods}

% SC-Linear~\cite{wei2025subspace} is a baseline~ANNS method (without index structure) designed for the subspace collision framework.
% As shown in Algorithm~\ref{SC-Linear}, in addition to variable initialization and subspace partitioning (lines~1-3), the core logic of the algorithm comprises three sequential phases: 
% (1) Collision counting: for each subspace, probing colliding data points with the query based on actual Euclidean distances (lines~4-10); 
% (2) Candidate selection: select a set of high-quality candidates by sorting data points according to SC-score in descending order (lines~11-13);
% (3) Result refinement: re-rank all candidate points in the original space and return the top-$k$ results (lines~14-15).
% Experimental results demonstrate SC-Linear achieves \textbf{>0.99} recall on benchmark datasets, validating the subspace collision framework's efficacy for high-accuracy ANNS.

SC-Linear~\cite{wei2025subspace} is a baseline~ANNS method (without index structure) designed for the subspace collision framework.
The core logic of the algorithm comprises three sequential phases: 
(1) Collision counting: for each subspace, probing colliding data points with the query based on actual Euclidean distances; 
(2) Candidate selection: select a set of high-quality candidates by sorting data points according to SC-score in descending order;
(3) Result refinement: re-rank all candidate points in the original space and return the top-$k$ results.
Experimental results demonstrate SC-Linear achieves over 0.99 recall on benchmark datasets, validating the subspace collision framework's efficacy for high-accuracy ANNS.

SuCo~\cite{wei2025subspace} is the state-of-the-art ANNS method developed upon the subspace collision framework. 
Compared with SC-Linear, SuCo further solves the problem of \enquote{\emph{how to count collisions in each subspace as quickly and accurately as possible}}.
Specifically, SuCo adopts the \emph{inverted multi-index (IMI)}~\cite{babenko2014inverted} as the index structure, and designs the \emph{Dynamic Activation Algorithm} for IMI to achieve efficient query.
% IMI replaces the vector quantization inside the inverted index (IVF) of K-means with the product quantization~\cite{jegou2010product}.
% Therefore, IMI can reduce the complexity of clustering from $\mathcal{O}(K \cdot n \cdot d \cdot t)$ to $\mathcal{O}(\sqrt{K} \cdot n \cdot d \cdot t)$, where $K$ is the number of K-means clusters and $t$ is the number of K-means iterations.
Experimental results confirm that SuCo achieves significantly higher efficiency than SC-Linear with an acceptable sacrifice of accuracy.
SuCo also achieves superior performance in both index construction and query answering compared to state-of-the-art ANNS methods, such as HNSW~\cite{malkov2018efficient}, OPQ~\cite{ge2013optimized}, and DET-LSH~\cite{detlsh}.

\section{Data-adaptive Subspace Collision} \label{data_adaptive_mechanism}

\subsection{Problem Analysis}\label{subsec:problem_analysis}

As introduced in Section~\ref{pre_subcollision}, the subspace collision framework adopts a naive, \emph{data-agnostic} subspace partitioning mechanism that uniformly divides the original $d$-dimensional space into $N_s$ fixed-size subspaces of $s=\lfloor \frac{d}{N_s} \rfloor$ dimensions.
This mechanism ignores the inherent data distribution and assigns equal weight to all subspaces. 
However, due to the data skew across subspaces, the significance of subspaces is imbalanced in practice.
Therefore, it is necessary to design a \emph{data-adaptive} subspace partitioning mechanism to mitigate errors arising from data skew across subspaces.

Based on characteristic analysis of the subspace collision framework, two potential technical routes may support data-adaptive subspace partitioning:
(1) first partition the subspace, and then assign different weights to each subspace according to underlying data distributions; or
(2) first transform data points according to underlying data distribution, and then partition them into subspaces.

For the first technical pathway, we explored various weighting schemes derived from data dispersion measures, including variance, average distance, and Local Intrinsic Dimensionality (LID)~\cite{johnsson2014low,kolacz2016measures}.
% , and a dispersion measure for multi-dimensional data~\cite{kolacz2016measures}. 
However, experimental attempts did not bring about query performance improvements, and even resulted in significant performance loss in some configurations.
We attribute the performance degradation mainly to two factors. 
First, data dispersion measures have difficulty in capturing underlying feature correlations, and intra-subspace measurements discard inter-subspace correlation information.
Second, weighted subspaces convert SC-score from an integer to a real-valued metric, losing the original performance advantage when counting collisions and selecting candidates.
% Therefore, we preliminarily conclude that weighted subspace approaches demonstrate limited efficacy for performance enhancement. 
% Nevertheless, we encourage researchers to explore further, such as using learning-based models to obtain subspace weights. 
%\tp{do we need the last sentence? I would remove}
%\jq{already remove}

In this paper, we adopt the second technical route: performing subspace-oriented data transformation before subspace partitioning.
This route enables comprehensive feature integration across all dimensions during the data transformation, while preserving integer-based SC-score computation, thereby achieving an effective balance between accuracy and efficiency.
The following section presents our subspace-oriented data transformation methodology for realizing data-adaptive subspace collision.

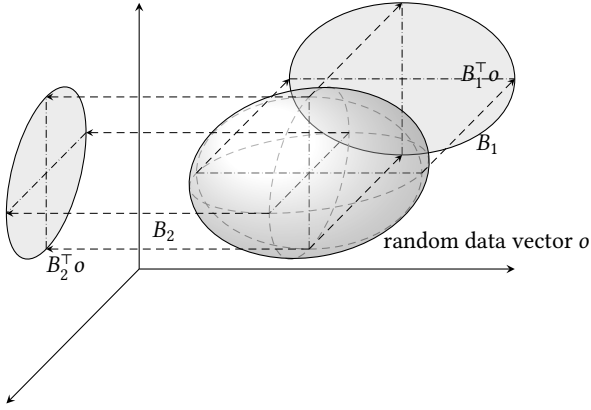
\begin{figure}[tb]
\centering
\small
\begin{tikzpicture}[>=stealth,z={(0:10mm)},x={(-135:5mm)},scale=0.5]
  \draw[->](-7,-5,-7)--++(10,0,0)node[right]{};%{$x$};
  \draw[->](-7,-5,-7)--++(0,0,10)node[above]{};%{$z$};
  \draw[->](-7,-5,-7)--++(0,7,0)node [above]{};%{$y$};
  \begin{scope}[canvas is yz plane at x=-7]
    \draw[fill=lightgray!30](0,0)ellipse(2 and 3);
    \draw[very thin,densely dashdotted](0,-3)--(0,3);
    \draw[very thin,densely dashdotted](-2,0)--(2,0);
  \end{scope}
  \begin{scope}[canvas is xy plane at z=-7]
    \draw[fill=lightgray!30](0,0)ellipse(3 and 2);
    \draw[very thin,densely dashdotted](0,-2)--(0,2);
    \draw[very thin,densely dashdotted](-3,0)--(3,0);
  \end{scope}
  \begin{scope}[canvas is xz plane at y=0]
    % \draw[densely dashed,red](0,0)circle(3);
    \draw[densely dashed,gray](0,0)circle(3);
  \end{scope}
  \begin{scope}[canvas is xy plane at z=0]
    % \draw[densely dashed,blue](0,0)ellipse(3 and 2);
    \draw[densely dashed,gray](0,0)ellipse(3 and 2);
    \draw[very thin,densely dashdotted](0,-2)--(0,2);
    \draw[very thin,densely dashdotted](-3,0)--(3,0);
  \end{scope}
  \begin{scope}[canvas is yz plane at x=0]
    % \draw[densely dashed,green!80!black](0,0)ellipse(2 and 3);
    \draw[densely dashed,gray](0,0)ellipse(2 and 3);
    \draw[very thin,densely dashdotted](0,-3)--(0,3);
    \draw[ball color=white,fill opacity=0.4,rotate=-10](0,0)ellipse(2.21 and 3.22);
  \end{scope}
  \draw[densely dashed,very thin,->](0,0,3)--(-7,0,3)node[midway,below right]{$B_1$};
  \draw[densely dashed,very thin,->](0,0,-3)--(-7,0,-3);
  \draw[densely dashed,very thin,->](0,2,0)--(-7,2,0);
  \draw[densely dashed,very thin,->](0,-2,0)--(-7,-2,0);
  \draw[densely dashed,very thin,->](0,2,0)--(0,2,-7);
  \draw[densely dashed,very thin,->](0,-2,0)--(0,-2,-7);
  \draw[densely dashed,very thin,->](3,0,0)--(3,0,-7)node[pos=0.4,below]{$B_2$};
  \draw[densely dashed,very thin,->](-3,0,0)--(-3,0,-7);
  \draw (5,0,6.5)node{random data vector $o$};
  \draw (7,0,-4)node{$B_2^\top o $};
  \draw (-3,1.5,3.5)node{$B_1^\top o$};
\end{tikzpicture}
\caption{{ Illustration of the proposed subspace-oriented entropy averaging approach, with random data vector $o$ of mean zero and covariance $\Sigma$ represented by a 3D ellipsoid. }}
\label{fig:projection}
\end{figure}

\subsection{Subspace-oriented Data Transformation via Entropy Averaging}
\label{entropy_averaging}

In this section, we describe the motivation and a few theoretical properties of the proposed Subspace-oriented Data Transformation method in Algorithms~\ref{data_transformation}~and~\ref{eigensystem_allocation}.

As motivated above in Section~\ref{subsec:problem_analysis}, here we aim to design a \emph{data-adaptive} subspace collision framework that takes account of the \emph{heterogeneity} in the (different subspaces of the) data.
%To model data having different variability in its coordinates, the simplest and most natural choice is the multivariate Gaussian distribution with non-identity covariance. 
%Consider that the input data follows a multivariate Gaussian distribution of zero mean and covariance $\Sigma$, i.e., $o \sim \mathcal{N}(0,\Sigma)$.
To model data having different variability in its coordinates, we consider that the input data $o$ are of mean zero and covariance $\Sigma$ (that is assumed positive definite and can be \emph{different} from the identity matrix).
Note in particular that we do \emph{not} need to assume the data distribution (e.g., that they are Gaussian), but that the covariance exists.
We aim to find a linear transformation $B \in \mathbb{R}^{d \times (N_s \cdot s)}$ that ``projects'' the input $o$ into $N_s$ subspaces of dimension $s$ each, in a \emph{data-adaptive} way described above.

In this paper, we adopt \emph{differential entropy}~\cite{Cover2005Elements.ch8} from information theory to measure the distributional differences of data in different subspaces.
It is commonly used to quantify the uncertainty or information content of (the distribution of) random data.
Precisely, for input data $o \in \mathbb{R}^{d}$, consider partitioning the linear transformation $B$ into $N_s$ blocks as follow: 
\begin{equation}\label{eq:def_R}
    B = \begin{bmatrix}
    B_1 & B_2 & \ldots & B_{N_s}
  \end{bmatrix},~B_j \in \mathbb{R}^{d \times s},~j \in \{ 1, \ldots, N_s \}.
\end{equation}
Then, the $j^{th}$ subspace after transformation is $B_j^\top o \in \mathbb{R}^{s}$.
If the obtained $N_s$ subspaces after transformation have approximately equal differential entropy, it signifies that the data information is equally spread over all subspaces (quantitatively measured by the area/volume of the ``information ellipsoids'' on the projected subspaces, as shown in Figure~\ref{fig:projection}).

Since the data vector $o$ is of mean zero and covariance $\Sigma$, each its subspace projection $B_j^\top o$ is of mean zero and covariance $B_j^\top \Sigma R_j$.
Our goal is to find a linear transformation $B$ such that, after transformation, the obtained subspaces $B_j^\top o$ are ``uniform'' in the sense of having balanced differential entropies.
This leads to the following constrained min-max optimization problem
\begin{equation}\label{eq:opt_0}
  \begin{aligned}
  \min_j \max_{B_j} & \quad h(B_j^\top o),\\ 
  \text{s.t.}& \quad B_j^\top B_j = I_s,
  \end{aligned}
\end{equation}%
where $h(B_j^\top o)$ is the differential entropy of $B_j^\top o$ that is continuously distributed.
It is known that for any random vector with given mean and covariance, the multi-variate Gaussian distribution maximizes the differential entropy~\cite{Cover2005Elements.ch8}, with
\begin{equation}
  h(B_j^\top o) \propto \log\det(B_j^\top \Sigma B_j),\quad \text{for}\quad B_j^\top o \sim \mathcal{N}(0,\Sigma),
\end{equation}
that is proportional to the log-determinant of the covariance $B_j^\top \Sigma B_j$, provided that $B_j$ has full rank.

As a consequence, the optimization problem in Equation~\eqref{eq:opt_0} can be relaxed into the following min-max problem: 
\begin{equation}\label{eq:opt_1}
  \begin{aligned}
  \min_j \max_{B_j} & \quad \log\det(B_j^\top \hat \Sigma B_j),\\ 
  \text{s.t.}& \quad B_j^\top B_j  = I_s,
  \end{aligned}
\end{equation}%
where we use the sample covariance $\hat \Sigma$ instead of its population counterpart $\Sigma$, and impose the orthonormal constraint so that each block transformation preserves scale: the data covariance is neither artificially amplified nor diminished within each subspace.

\begin{algorithm}[tb]
% \small
% \footnotesize
	\caption{Subspace-oriented Data Transformation}                                
	\label{data_transformation}
	\LinesNumbered
	\KwIn{Dataset $\mathcal D = \{o_1,o_2,\ldots,o_n\}$, dataset size $n$, data dimensionality $d$, subspace number $N_s$, subspace dimensionality $s$}
	\KwOut{Transformed dataset $\mathcal T$}
        Initialize the transformed dataset $\mathcal T$ with size $(n, N_s \cdot s)$; \\
        Compute the mean value of $\mathcal D$: $\bar o = \frac{1}{n}\sum^{n}_{i=1}o_i$; \\
	Compute the covariance matrix of all points: $\hat \Sigma=\frac{1}{n-1}\sum^{n}_{i=1}(o_i-\bar o)(o_i-\bar o)^\top$; \\
        Perform the spectral decomposition of $\hat \Sigma$, obtain the eigenvectors $\mathcal E = \{\xi_1, \xi_2, \ldots, \xi_d\}$ and their corresponding eigenvalues $\Lambda = \{\lambda_1, \lambda_2, \ldots, \lambda_d\}$; \\
        $B \leftarrow$ \textbf{call} \emph{Eigensystem Allocation$(\mathcal E, \Lambda, d, N_s, s)$}; \\
               
	\For{$i=1$ to $n$}{
        \For{$j=1$ to $N_s$}{ 
        Obtain the eigenvectors assigned to the $j$-th subspace: $B_j=\{\xi_1^j, \xi_2^j, \ldots, \xi_s^j\}$; \\
            For any $o_i \in \mathbb{R}^d$, its transformed representation at the $j$-th subspace is: $r_i^j \leftarrow B_j^\top (o_i-\bar o) = ((\xi_1^{j})^\top(o_i-\bar o), (\xi_2^{j})^\top(o_i-\bar o), \ldots, (\xi_s^{j})^\top(o_i-\bar o)) \in \mathbb{R}^s$; \\
        }
        Concatenate the transformed representations of $o_i$ in $N_s$ subspaces $(r_i^1,r_i^2,\ldots,r_i^{N_s})$ to obtain the transformed data point $o_i^\prime \in \mathbb{R}^{N_s \cdot s}$; \\
        $\mathcal T_i \leftarrow o_i^\prime$; \\
	}
	\Return the transformed dataset $\mathcal T$; \\
\end{algorithm}

The min-max optimization problem in \eqref{eq:opt_1} can be effectively solved using the Eigensystem Allocation method described in Algorithm~\ref{eigensystem_allocation}, per the following result.
% \textcolor{red}{
% This is also empirically supported by Gaussianity tests of the five datasets used in the paper, where we observe that GIST1M and SIFT10M data are far from Gaussian, while DEEP1M, Yandex Deep10M, and Microsoft SPACEV10M are closer to Gaussian.
% As shown in Section~\ref{experimental_evaluation}, TaCo demonstrates strong performance on all five Gaussian and non-Gaussian datasets.
% }

\begin{theorem}[Performance Guarantee for Algorithm~\ref{eigensystem_allocation}]\label{theo:main}
Assume that the data sample covariance $\hat \Sigma$ has distinct eigenvalues that are all greater than or equal to one.
Then, the Eigensystem Allocation method in Algorithm~\ref{eigensystem_allocation} solves the optimization problem in Equation~\eqref{eq:opt_1}.
\end{theorem}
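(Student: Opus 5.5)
The plan is to read \eqref{eq:opt_1} as the joint problem $\max_{B}\min_j \log\det(B_j^\top\hat\Sigma B_j)$ over $B=[B_1,\ldots,B_{N_s}]$ with orthonormal columns. The intent of the constraint is that different subspaces should not share directions, so I would impose $B^\top B=I_{N_s s}$ jointly, not only blockwise. I would then proceed in three steps: an upper bound valid for every feasible $B$; a reduction showing the optimum can be taken with columns equal to eigenvectors of $\hat\Sigma$; and a check that Algorithm~\ref{eigensystem_allocation} picks the best allocation of eigenvectors to blocks. Throughout, write $\hat\Sigma=\sum_i\lambda_i\xi_i\xi_i^\top$ with $\lambda_1>\cdots>\lambda_d\ge 1$.

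\textbf{Upper bound.} For any feasible $B$, Fischer's inequality applied to the positive definite matrix $B^\top\hat\Sigma B$ with its diagonal blocks $B_j^\top\hat\Sigma B_j$ gives
\[
\sum_{j}\log\det(B_j^\top\hat\Sigma B_j)\le \log\det(B^\top\hat\Sigma B).
\]
By Cauchy interlacing (Poincaré separation), the eigenvalues $\mu_1\ge\cdots\ge\mu_{N_s s}$ of $B^\top\hat\Sigma B$ satisfy $\mu_i\le\lambda_i$. Hence the right-hand side is at most $\sum_{i\le N_s s}\log\lambda_i$. So the minimum over $j$ is at most the average $\frac1{N_s}\sum_{i\le N_s s}\log\lambda_i$. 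More usefully, the same interlacing argument applied blockwise shows that the vector of block log-determinants is dominated by what an eigenvector allocation can achieve.

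\textbf{Reduction to eigenvector allocations.} This is the step I expect to be the main obstacle. I would show that any feasible $B$ can be replaced by some $B'$ whose columns are distinct eigenvectors $\xi_i$, arranged so that every block log-determinant of $B'$ is at least the minimum block log-determinant of $B$.

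First try: pass to the eigenbasis, so $B^\top\hat\Sigma B$ has diagonal $(\sum_i \lambda_i |b_{i\ell}|^2)_\ell$. By Hadamard's inequality, $\log\det(B_j^\top\hat\Sigma B_j)\le\sum_{\ell\in j}\log\big(\sum_i\lambda_i|b_{i\ell}|^2\big)$. Each weight vector $(|b_{i\ell}|^2)_i$ lies in a doubly substochastic matrix, so by concavity of $\log$ this is at most $\sum_{\ell\in j}\sum_i |b_{i\ell}|^2\log\lambda_i$. Summing over blocks gives a linear objective in a doubly substochastic matrix. A max–min of linear functions over that polytope is attained at, or dominated by, a vertex, namely a partial permutation, by Birkhoff–von Neumann. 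The hypothesis $\lambda_i\ge1$ is exactly what makes every $\log\lambda_i\ge0$. This ensures that dropping mass, as in the substochastic case, never helps. It also ensures that unused directions should be the smallest eigenvalues. The distinct-eigenvalue assumption makes the eigenvectors, and hence the allocation, well defined. The delicate point is that the max–min over a polytope need not be attained at a vertex. I would handle this by writing the max–min as a linear program in the auxiliary variable $t\le$ each block sum and checking that its optimum can be matched by an integral allocation. If that fails, I would fall back on a majorization argument over the block sums.

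\textbf{Combinatorial allocation.} After the reduction, the problem becomes: partition $N_s s$ eigenvalue indices into $N_s$ groups of size $s$ so as to maximize the minimum group sum of $\log\lambda_i$. Using $\lambda_i\ge1$, the chosen indices should be the top $N_s s$. I would then verify directly from the construction in Algorithm~\ref{eigensystem_allocation} that its assignment attains this max–min. I expect the algorithm assigns eigenvectors in decreasing eigenvalue order to the currently lowest-entropy subspace that is not yet full. I would check this by an exchange argument: swapping any two eigenvectors between blocks cannot raise the minimum. Finally, $B_j=[\xi^j_1,\ldots,\xi^j_s]$ is feasible, so it attains the bound and solves \eqref{eq:opt_1}.
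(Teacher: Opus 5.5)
The two steps you flag as delicate cannot be completed, and the inequalities meant to carry them point the wrong way. Fischer's inequality says $\det(B^\top\hat\Sigma B)\le\prod_j\det(B_j^\top\hat\Sigma B_j)$. It therefore bounds $\sum_j\log\det(B_j^\top\hat\Sigma B_j)$ from \emph{below}, not above. Likewise, concavity of $\log$ gives $\log\sum_i w_i\lambda_i\ge\sum_i w_i\log\lambda_i$, so Hadamard followed by Jensen does not yield an upper bound linear in the weights $|b_{i\ell}|^2$.

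This is not a repairable slip: under your jointly orthonormal max--min reading, the reduction to eigenvector allocations is false. Take $d=N_s=2$, $s=1$, $\lambda_1>\lambda_2\ge 1$. Algorithm~\ref{eigensystem_allocation} returns $B_1=\xi_1$, $B_2=\xi_2$, with minimum $\log\lambda_2$. The orthonormal pair $(\xi_1\pm\xi_2)/\sqrt2$ instead gives both blocks the value $\log\frac{\lambda_1+\lambda_2}{2}>\log\lambda_2$, which also exceeds your claimed bound $\frac12\log(\lambda_1\lambda_2)$. Mixing eigen-directions across blocks raises the minimum. So in this instance no eigenvector allocation, and in particular not the algorithm's output, solves the problem you set up.

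Even restricted to eigenvector allocations, the combinatorial step fails. Algorithm~\ref{eigensystem_allocation} is an LPT-type greedy heuristic for balanced partitioning of the numbers $\log\lambda_i$, and it is not exact. Take eigenvalues $10,9,4,3,2,1$ with $N_s=2$, $s=3$. The algorithm produces $\{\xi_1,\xi_4,\xi_5\}$ and $\{\xi_2,\xi_3,\xi_6\}$, with eigenvalue products $60$ and $36$; any rescaling by a factor $\ge1$ in line~3 leaves this unchanged. The single swap $\xi_1\leftrightarrow\xi_2$ gives $54$ and $40$: a larger minimum, a smaller maximum and a smaller spread. So your exchange argument breaks under any reasonable notion of balance.

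The paper's own proof does not supply this step either. It uses Poincar\'e separation to show that each block in isolation is maximized by $[\xi_1,\ldots,\xi_s]$ with value $\prod_{i\le s}\lambda_i$, and then only asserts that the greedy solves the outer problem. Read literally, \eqref{eq:opt_1} is solved by taking every $B_j=[\xi_1,\ldots,\xi_s]$. No block of the algorithm attains this once $N_s,s\ge2$, since $\xi_1$ and $\xi_2$ land in different buckets. What can actually be proved is the interlacing bound for the inner problem. Exact optimality of the greedy allocation needs a different formulation, or must be weakened to an approximation guarantee.
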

\begin{proof}[Proof of Theorem~\ref{theo:main}]
To show that Eigensystem Allocation in Algorithm~\ref{eigensystem_allocation} solves the optimization problem in \eqref{eq:opt_1}, we first focus on the inner maximization problem:
Denote $\lambda_1 > \ldots > \lambda_d$ and $\mu_1 \geq \ldots \geq \mu_s$ the eigenvalues of $\hat \Sigma$ and $B_j^\top \hat \Sigma B_j$, respectively (in descending order), we have, by the Poincaré separation theorem (also known as general Cauchy interlacing theorem) that 
\begin{equation}\label{eq:interlacing}
  \lambda_i \geq \mu_i \geq \lambda_{d - s + i}, \quad i = 1, \ldots, s,
\end{equation}
that is, each eigenvalue $\mu_i$ of $B_j^\top \hat \Sigma B_j$ lies between two eigenvalues of $\hat \Sigma$.
Since the logarithm function is monotonically increasing, maximizing the (log-)determinant in \eqref{eq:opt_1} is equivalent to maximizing the product of eigenvalues.
We formally have, by \eqref{eq:interlacing}, that
\begin{equation}
   \max_{B_j^\top B_j = I_s} \det(B_j^\top \hat \Sigma B_j) = \prod_{i=1}^s \lambda_i,
 \end{equation} 
and the maximum is achieved by taking $B_j$ to be the associated eigenvectors, i.e., $B_j = [\xi_1, \ldots, \xi_s]$.

Further note that the greedy procedure in Algorithm~\ref{eigensystem_allocation} finds the optimal ``balanced'' partition $B_1, \ldots, B_{N_s}$ that solves the outer minimization problem of \eqref{eq:opt_1}.
This concludes the proof. 
% of Theorem~\ref{theo:main}.
\end{proof}

\begin{algorithm}[tb]
% \small
% \footnotesize
	\caption{Eigensystem Allocation}                 
	\label{eigensystem_allocation}
	\LinesNumbered
	\KwIn{Eigenvectors $\mathcal E = \{\xi_1, \xi_2, \ldots, \xi_d\}$ and eigenvalues $\Lambda = \{\lambda_1, \lambda_2, \ldots, \lambda_d\}$ of $\hat \Sigma$, data dimensionality $d$, subspace number $N_s$, subspace dimensionality $s$}
	\KwOut{A set of buckets holding eigenvectors assigned to each subspace $B = \{B_1, B_2, \ldots, B_{N_s}\}$ }
    Initialize a set of $N_s$ empty buckets $B = \{B_1, B_2, \ldots, B_{N_s}\}$ to hold the eigenvectors assigned to each subspace; \\
    Initialize $N_s$ values $p_1 = p_2 = \ldots = p_{N_s} = 1$ to hold the product of the eigenvalues assigned to each subspace; \\
    Scale the eigenvalues proportionally such that $\forall \lambda_i \in \Lambda, \; \lambda_i \geq 1$; \\
    Reorder the eigenvalues and eigenvectors in descending order: $\mathcal E^\prime = \{\xi_1^\prime, \xi_2^\prime, \ldots, \xi_d^\prime\}$, $\Lambda^\prime = \{\lambda_1^\prime, \lambda_2^\prime, \ldots, \lambda_d^\prime\}$, such that $\lambda_1^\prime \geq \lambda_2^\prime \geq \ldots \geq \lambda_d^\prime \geq 1$; \\
        
	\For{$i=1$ to $N_s \cdot s$}{
        $bucket \leftarrow \mathop{\mathrm{argmin}}\limits_{j \in \{1,2,\ldots,N_s\},\; |B_j| < s} p_j$; \\
        $B_{bucket} \leftarrow B_{bucket} \cup \{\xi_i^\prime\}$; \\
        $p_{bucket} \leftarrow p_{bucket} \cdot \lambda_i^\prime$; \\
	}
	\Return the eigenvectors allocation set $B$; \\
\end{algorithm}

\subsection{SC-score Distribution after Transformation}

In this section, we explore the neighborhood relationship of the data after the subspace-oriented transformation.
We first analyze the SC-score distributions of the transformed data on four datasets under a unified experimental setting, with $N_s = 6$, $s = 8$, and $\alpha = 0.05$.
Figure~\ref{scscore_transformed} shows that the SC-score of the transformed data still follows the \emph{Pareto principle}, exhibiting the same statistical behavior as the original (untransformed) data shown in Figure~\ref{scscore}.
Experimental results indicate that, the transformed data preserves the good neighborhood relationships
of the original data.

\begin{figure} [tb]
% \vspace*{-0.3cm}
	% \flushleft
        \subfigcapskip=5pt
        \subfigure[DEEP1M]{
		\includegraphics[width=0.32\linewidth]{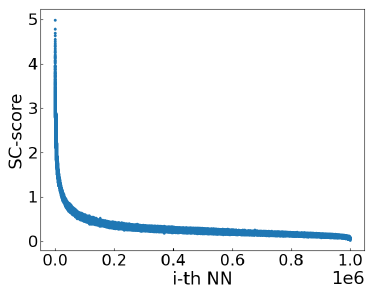}
		\label{deep1m_transformed}}\hspace{2mm}
	\subfigure[GIST1M]{
		\includegraphics[width=0.32\linewidth]{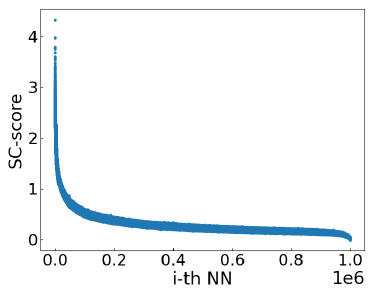}
		\label{gist1m_transformed}}\hspace{2mm}
	\subfigure[Yandex DEEP10M]{
		\includegraphics[width=0.32\linewidth]{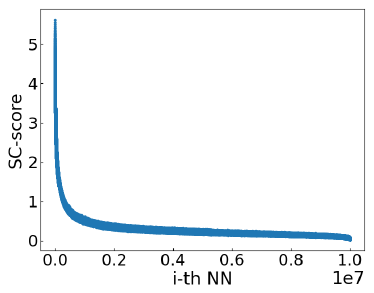}
		\label{deep10m_transformed}}\hspace{2mm}
        \subfigure[Microsoft SPACEV10M]{
		\includegraphics[width=0.32\linewidth]{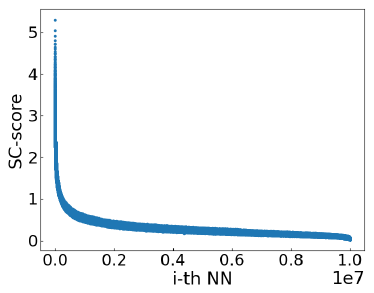}
		\label{spacev10m_transformed}}
  % \vspace*{-0.3cm}
	\caption{SC-score of the transformed data still follows the Pareto principle as the original data in Figure~\ref{scscore}.}
	\label{scscore_transformed}
 % \vspace*{-0.3cm}
\end{figure}

To investigate the underlying mechanism behind this phenomenon, in the remainder of this section, we perform rigorous theoretical analyses showing that the transformation introduced in Algorithm~\ref{eigensystem_allocation} preserves (pairwise) local distances (Lemma~\ref{lemma:distance}), and consequently the relative neighborhood ordering (Theorem~\ref{theo:relative_order}).
% Lemma~\ref{lemma:distance} shows that if local difference vectors are well captured by the allocated eigenvectors from Algorithm~\ref{eigensystem_allocation}, then Algorithm~\ref{data_transformation} induces only bounded distortion on pairwise local distances.
\begin{lemma}[Local Distance Preservation for Algorithm~\ref{eigensystem_allocation}]\label{lemma:distance}
Let $\mathcal D=\{o_i\}_{i=1}^n \in \mathbb{R}^d$ be a dataset, and let
$B \in \mathbb{R}^{ d \times (N_s \cdot s)}$ denote the transformation returned by Algorithm~\ref{eigensystem_allocation}, i.e., $B=
  \begin{bmatrix}
    B_1 & B_2 & \ldots & B_{N_s}
  \end{bmatrix}
$, with $B_j \in \mathbb{R}^{d \times s}$ and $N_s \cdot s \leq d$.
For any point $o_i \in \mathcal{D}$ and its neighbor
$o_j$, assume that the local difference vector
satisfies
\begin{equation}
\label{eq:deterministic_assumption}
\|(I_d - B B^\top)(o_i - o_j)\|^2
\le
\varepsilon \, \|o_i - o_j\|^2,
\end{equation}
for some $\varepsilon \in (0,1)$.
Then, the distance after transformation satisfies\footnote{The assumption of the form in \eqref{eq:deterministic_assumption} has been established for random data points drawn from the so-called spiked random matrix model, see~\cite{paul2007asymptotics,johnstone2018pca,couillet2022RMT4ML}. }
\begin{equation}
\label{eq:deterministic_bound}
(1-\varepsilon)\,\|o_i - o_j\|^2
\le
\|B^\top (o_i - o_j)\|^2
\le
\|o_i - o_j\|^2.
\end{equation}
\end{lemma}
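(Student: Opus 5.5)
The argument is a Pythagorean decomposition with respect to the column space of $B$. The one structural fact needed is that $B$ has orthonormal columns, $B^\top B = I_{N_s \cdot s}$. By Algorithm~\ref{eigensystem_allocation}, every column of $B$ is an eigenvector $\xi_i^\prime$ of the symmetric matrix $\hat\Sigma$. Each eigenvector is assigned to exactly one bucket, since the loop visits each index $i \le N_s\cdot s$ once. Because the eigenvalues of $\hat\Sigma$ are distinct, as assumed in Theorem~\ref{theo:main}, or more generally because the spectral decomposition in Algorithm~\ref{data_transformation} returns an orthonormal eigenbasis, these $N_s\cdot s \le d$ columns are mutually orthonormal. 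Reordering columns among blocks does not affect this. Hence $P := BB^\top$ is the orthogonal projector onto $\mathrm{span}(B)$: it is symmetric and satisfies $P^2 = B(B^\top B)B^\top = P$.

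Set $v = o_i - o_j$. First, $\|B^\top v\|^2 = v^\top B B^\top v = v^\top P v = v^\top P^\top P v = \|Pv\|^2$. Second, $v = Pv + (I_d - P)v$, and the two summands are orthogonal because $P^\top(I_d-P) = P - P^2 = 0$. The Pythagorean identity then gives
\begin{equation*}
\|v\|^2 = \|B^\top v\|^2 + \|(I_d - BB^\top)v\|^2 .
\end{equation*}

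Both bounds in \eqref{eq:deterministic_bound} follow from this identity. The upper bound holds because the residual term is nonnegative, so $\|B^\top v\|^2 \le \|v\|^2$. For the lower bound, apply assumption \eqref{eq:deterministic_assumption} to the residual term: $\|B^\top v\|^2 = \|v\|^2 - \|(I_d-BB^\top)v\|^2 \ge (1-\varepsilon)\|v\|^2$.

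The only point that needs care is the orthonormality of the columns of $B$ across different blocks, not just within a block. I will justify it by noting that Algorithm~\ref{eigensystem_allocation} only permutes a subset of the orthonormal eigenvectors into buckets. The rescaling step in Algorithm~\ref{eigensystem_allocation} changes eigenvalues only, not eigenvectors, so it does not affect this. Once orthonormality is established, the rest is routine.
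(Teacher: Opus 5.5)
Your proof is correct and follows the paper's argument: the columns of $B$ are orthonormal eigenvectors of $\hat\Sigma$, so $BB^\top$ is an orthogonal projector, which gives the Pythagorean identity $\|v\|^2 = \|B^\top v\|^2 + \|(I_d - BB^\top)v\|^2$; the assumption then yields the lower bound and nonnegativity of the residual yields the upper bound. You are more explicit than the paper in justifying orthonormality across blocks and the identity $\|B^\top v\| = \|BB^\top v\|$, both of which the paper uses without comment.
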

\begin{proof}[Proof of Lemma~\ref{lemma:distance}]
Recall from Algorithm~\ref{eigensystem_allocation} that $B \in \mathbb{R}^{ d \times (N_s \cdot s)}$ contains eigenvectors of $\hat \Sigma$ as its columns, so that $B B^\top$ is an orthogonal projection matrix. 
As a consequence, $o_i - o_j \in \mathbb{R}^d$ can be
decomposed as the sum of the projection $B B^\top(o_i - o_j)$ and the corresponding residue $(I_d - B B^\top)(o_i - o_j)$, with
% into two orthogonal components lying in the retained and discarded subspaces, respectively. 
% Therefore,
\begin{equation}
\label{eq:orth_decomposition}
\|o_i - o_j\|^2
=
\|B^\top(o_i - o_j)\|^2
+
\|(I_d - B B^\top)(o_i - o_j)\|^2.
\end{equation}
Rearranging~\eqref{eq:orth_decomposition} gives
\begin{equation}
\label{eq:rearranged}
\|B^\top (o_i - o_j)\|^2
=
\|o_i - o_j\|^2
-
\|(I_d - B B^\top)(o_i - o_j)\|^2.
\end{equation}
By~~\eqref{eq:deterministic_assumption}, the second term
on the right-hand side is bounded by
$\varepsilon \|o_i - o_j\|^2$, which yields the lower bound in
\eqref{eq:deterministic_bound}.
The upper bound follows directly from the non-expansiveness of
orthogonal projection.
% i.e., $\|P(o_i - o_j)\| \le \|o_i - o_j\|$.
% This concludes the proof of Lemma~\ref{lemma:distance}.
\end{proof}
With Lemma~\ref{lemma:distance} at hand, we show next in Theorem~\ref{theo:relative_order} that Algorithm~\ref{data_transformation} is guaranteed to preserve relative neighborhood ordering under the conditions of Lemma~\ref{lemma:distance}.
\begin{theorem}[Relative Neighborhood Ordering Preservation for Algorithm~\ref{data_transformation}]
\label{theo:relative_order}
Under the notations and conditions of Lemma~\ref{lemma:distance}, consider two points
$o_j, o_z \in \mathcal{D}$ such that
\begin{equation}
\label{eq:relative_condition}
\|o_i - o_j\|^2
<
(1-\varepsilon)\,\|o_i - o_z\|^2.
\end{equation}
Then, their relative ordering with respect to $o_i$ is preserved after
the transformation through $B$, in the sense that
\begin{equation}
\|B^\top(o_i - o_j)\|
<
\|B^\top(o_i - o_z)\|.
\end{equation}
\end{theorem}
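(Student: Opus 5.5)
The plan is to sandwich both transformed distances using the two-sided bound of Lemma~\ref{lemma:distance}, and then chain the inequalities through the hypothesis \eqref{eq:relative_condition}.

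First, apply the upper bound in \eqref{eq:deterministic_bound} to the pair $(o_i,o_j)$:
\[
\|B^\top(o_i-o_j)\|^2 \le \|o_i-o_j\|^2 .
\]
This step needs no assumption beyond $B$ having orthonormal columns, which we get from eigenvectors of the symmetric matrix $\hat\Sigma$. Second, apply the lower bound in \eqref{eq:deterministic_bound} to the pair $(o_i,o_z)$:
\[
\|B^\top(o_i-o_z)\|^2 \ge (1-\varepsilon)\|o_i-o_z\|^2 .
\]
This step requires that the residual condition \eqref{eq:deterministic_assumption} also holds for the difference vector $o_i-o_z$. I would read ``under the conditions of Lemma~\ref{lemma:distance}'' as asserting \eqref{eq:deterministic_assumption} for both $o_j$ and $o_z$ as neighbors of $o_i$, and state this explicitly.

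Combining the two bounds with \eqref{eq:relative_condition} gives the strict chain
\[
\|B^\top(o_i-o_j)\|^2 \le \|o_i-o_j\|^2 < (1-\varepsilon)\|o_i-o_z\|^2 \le \|B^\top(o_i-o_z)\|^2 .
\]
Taking nonnegative square roots, which is monotone, yields the claim. The strict inequality comes entirely from the hypothesis, so the non-strict bounds on either side are harmless.

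The only real obstacle is interpretive: making sure the lower bound is legitimately available for $o_z$, since the lemma is phrased for ``a neighbor $o_j$''. If that is not granted, the argument fails for a far point $o_z$ whose difference lies mostly in the discarded directions. I would therefore make the assumption explicit for the pair $(o_i,o_z)$ rather than try to derive it.
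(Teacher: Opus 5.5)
Your proof is correct and is essentially the paper's argument: apply the upper bound of Lemma~\ref{lemma:distance} to $o_i - o_j$ and the lower bound to $o_i - o_z$, then chain them through \eqref{eq:relative_condition}. Your remark that \eqref{eq:deterministic_assumption} must also hold for the pair $(o_i,o_z)$ is a fair clarification; the paper leaves this implicit in the phrase ``under the notations and conditions of Lemma~\ref{lemma:distance}''.
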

\begin{proof} [Proof of Theorem~\ref{theo:relative_order}]
It follows from Lemma~\ref{lemma:distance} that
\[
\|B^\top(o_i - o_j)\|^2
\le
\|o_i - o_j\|^2
\quad\text{and}\quad
\|B^\top(o_i - o_z)\|^2
\ge
(1-\varepsilon)\,\|o_i - o_z\|^2.
\]
Combining these two inequalities with
\eqref{eq:relative_condition} yields
\[
\|B^\top(o_i - o_j)\|^2
<
\|B^\top(o_i - o_z)\|^2.
\]
This concludes the proof of Theorem~\ref{theo:relative_order}.
\end{proof}

\begin{figure}[tb] 
	\centering
	\includegraphics[width=0.9\linewidth]{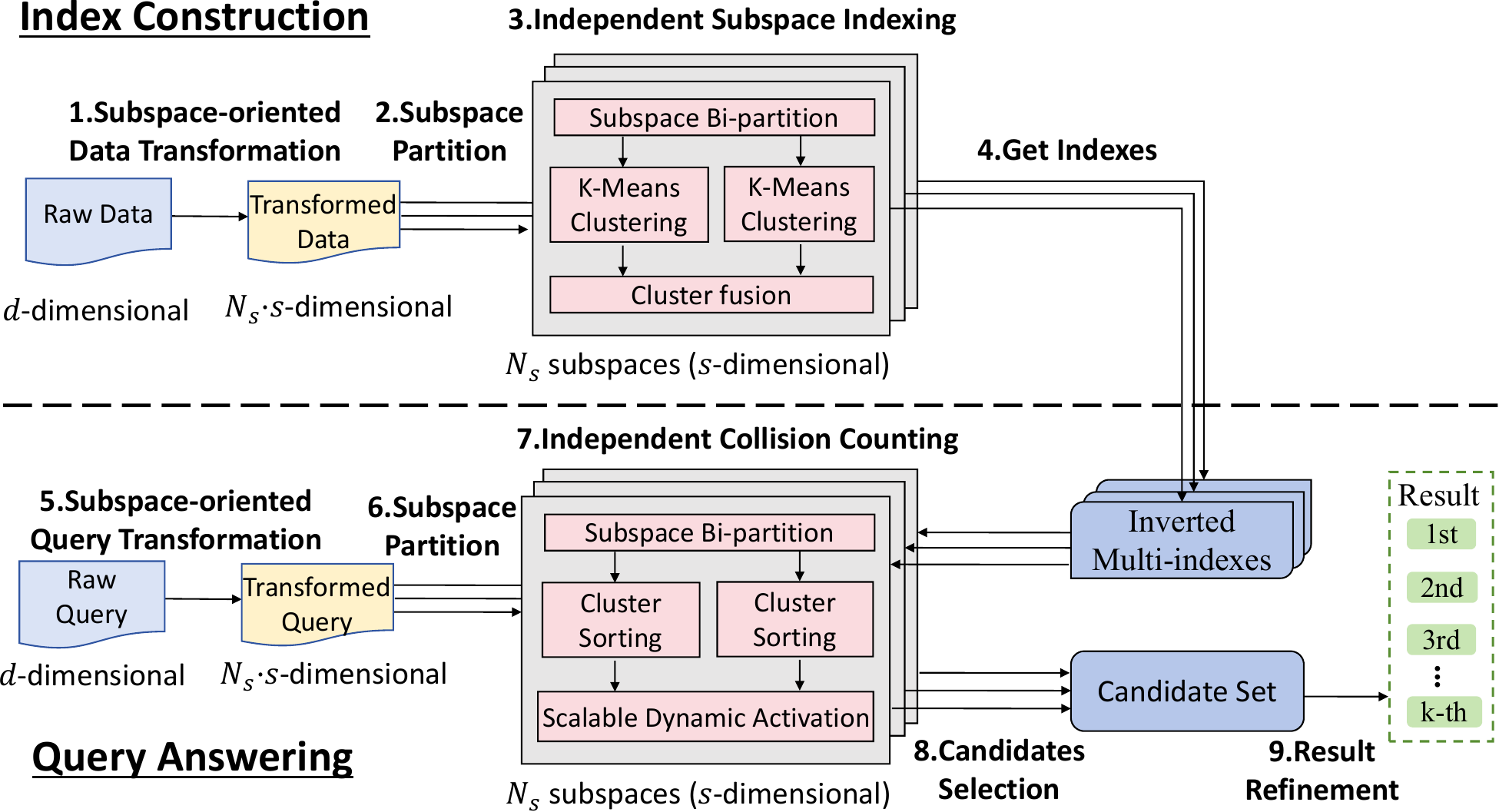}
  % \vspace*{-0.3cm}
	\caption{Overview of the TaCo workflow.}
	\label{overview}
  % \vspace*{-0.4cm}
\end{figure}

\section{The TaCo Method} \label{our_method}

% Section~\ref{data_adaptive_mechanism} presents subspace-oriented data transformation, enabling the subspace collision framework to become data-adaptive.
% A further requirement is to design a $k$-ANNS method for the new framework to achieve efficient index construction and query answering.

In this section, we describe a data-adaptive and query-aware subspace collision method called TaCo.
Figure~\ref{overview} illustrates the TaCo workflow.
During index construction, subspace-oriented data transformation is first applied, reducing data dimensionality from $d$ to $N_s \cdot s$.
The transformed data is then partitioned into $N_s$ subspaces, and an inverted multi-index (IMI) is constructed in each subspace.
During query answering, the query undergoes transformation consistent with the index construction and is subsequently partitioned into $N_s$ subspaces.
The novel \emph{Scalable Dynamic Activation} algorithm is then executed on each subspace's IMI to enable efficient collision counting.
Next, the \emph{Query-aware Candidates Selection} algorithm identifies candidate points based on the collision distribution of each query.
The final step involves refining the candidate set and returning the top-$k$ results.

\subsection{Index Construction} \label{indexing_phase}

As analyzed in Section~\ref{data_adaptive_mechanism}, subspace-oriented data transformation is a prerequisite for adapting the subspace collision framework to the data distribution.
This process is formalized in Algorithm~\ref{data_transformation}.
First, compute the mean vector and covariance matrix of data points, and perform the spectral decomposition to obtain eigenvalues and eigenvectors (lines~2-4).
Algorithm~\ref{eigensystem_allocation} (detailed subsequently) is then invoked to generate the subspace-oriented eigenvector allocation set (line~5).
Within each subspace, all data points are transformed using the assigned $s$ eigenvectors (lines~6-9).
The concatenation of these subspace-specific transformed representations produces the final transformed dataset of dimensionality $N_s \cdot s$ (lines~10-12).

Algorithm~\ref{eigensystem_allocation} presents the eigensystem allocation procedure, which provides the optimal solution for balanced subspace partition, as proven in Section~\ref{entropy_averaging}.
Specifically, $N_s$ eigenvector buckets and eigenvalue product trackers are created (lines~1-2). 
Eigenvalues are scaled for ease of calculation (line~3), then reordered with corresponding eigenvectors in descending sequence (line~4).
The eigensystem allocation sequentially processes the top $N_s \cdot s$ eigenvectors in descending eigenvalue order (line~5). 
For each eigenvector being processed (corresponding to the largest available eigenvalue), the bucket with the minimum product of the eigenvalues is selected. 
Each bucket contains no more than $s$ eigenvectors (line~6).
The eigenvector is then assigned to the selected bucket and the corresponding eigenvalue product tracker is updated (lines~7-8).
The finalized eigenvector allocation set is returned (line~9).

\begin{algorithm}[tb]
% \small
% \footnotesize
	\caption{Create Index}                                                                           
	\label{create_index}
	\LinesNumbered
	\KwIn{Dataset $\mathcal D$, dataset size $n$, data dimensionality $d$, subspace number $N_s$, subspace dimensionality $s$, number of K-means clusters $K$, K-means iterations $t$}
	\KwOut{Centroid list $centroids$ and inverted multi-index list $IMIs$}
    $\mathcal T \leftarrow$ \textbf{call} \emph{Subspace-oriented Data Transformation$(\mathcal D, n, d, N_s, s)$}; \\
	Divide the transformed dataset $\mathcal T$ into $N_s$ subspaces: $\mathcal T_1, \mathcal T_2,\ldots, \mathcal T_{N_s}$, each with $s$ dimensions; \\
    Initialize a list $centroids$ of length $2N_s$ and a list $IMIs$ of length $N_s$; \\
	\For{$i=1$ to $N_s$}{
            Initialize a map $IMI_i$ as the inverted multi-index; \\
            Further split $\mathcal T_i$ into two parts: $\mathcal T_i^1$ and $\mathcal T_i^2$; \\
            $centroids^1_i, assignments^1_i \leftarrow$ \textbf{call} \emph{Kmeans$(\mathcal T_i^1, \sqrt{K}, t)$}; \\
            $centroids^2_i, assignments^2_i \leftarrow$ \textbf{call} \emph{Kmeans$(\mathcal T^2_i, \sqrt{K}, t)$}; \\
            $centroids.append(centroids^1_i, centroids^2_i)$; \\
            \For{$j=0$ to $n-1$}{
            $IMI_i[assignments^1_i[j], assignments^2_i[j]].append(j)$; \\
            }
            $IMIs.append(IMI_i)$; \\
	}
	\Return $centroids$ and $IMIs$; \\
\end{algorithm}

Figure~\ref{overview} illustrates the workflow of TaCo for index construction, while Algorithm~\ref{create_index} provides the corresponding pseudocode.
The process begins by invoking the \emph{Subspace-oriented Data Transformation} algorithm to the original dataset, resulting in a transformed dataset with reduced dimensionality $N_s \cdot s$ (line~1).
This transformed dataset is then partitioned into $N_s$ subspaces, each of dimensionality $s$ (line~2).
% TaCo's index comprises two components: centroids and inverted multi-indexes.
% Accordingly, the initialization of two lists is required to hold these elements during index construction (line~3).
For each subspace, the algorithm constructs a separate inverted multi-index (lines~4-12). 
First, the subspace data is split into two disjoint parts in terms of dimension (line~6). 
Each part is then clustered using K-means with $\sqrt{K}$ centroids and $t$ iterations, generating two sets of centroids and corresponding assignment labels for all data points (lines~7-8). 
% The centroids from both parts are stored in the centroid list (line~9).
To enable efficient retrieval during query processing, a map structure is constructed as the inverted multi-index for each subspace.
For each point, its assignment labels from both clustering operations form a composite key within the map, under which the point's original identifier is appended to the corresponding cell (lines~10-12). 
% Finally, the computed centroids and the constructed IMI lists are returned.

We note that the main difference between TaCo and SuCo~\cite{wei2025subspace} in index construction is that TaCo achieves balanced subspace partitioning by invoking the \emph{Subspace-oriented Data Transformation} function (Algorithm~\ref{data_transformation}), which addresses the data-agnostic subspace partitioning problem encountered by SuCo's index structure.
The subsequent steps of building an IMI within each subspace follow the same procedure in both TaCo and SuCo.

\begin{algorithm}[tb]
% \small 
% \footnotesize
	\caption{Scalable Dynamic Activation}
	\label{scalable_dynamic_activation}
	\LinesNumbered
	\KwIn{Collision ratio $\alpha$, dataset size $n$, number of K-means clusters $K$, distances and indices of the first and second parts in a subspace $dists_1,idx_1,dists_2,idx_2$, the inverted multi-index $IMI$}
	\KwOut{Clusters containing data points that collide with $q$}
        Initialize a list $retrieved\_clusters$, a min heap $active\_clusters$, an array $active\_idx$ of length $\sqrt{K}$ and set to 0; \\
        $retrieved\_num \leftarrow 0$; \\
        $active\_clusters.push(pair<dists_1[idx_1[0]]+dists_2[idx_2[0]], 0>)$; \\
        \While{\textit{TRUE}}{
            $selected\_cell \leftarrow active\_clusters.top()$; \\
            $pos \leftarrow selected\_cell.second$; \\
            $cluster \leftarrow IMI[idx_1[pos], idx_2[active\_idx[pos]]]$; \\
            $retrieved\_clusters.append(cluster)$; \\
            $retrieved\_num$ += $sizeof(cluster)$; \\
            \If{$retrieved\_num \geq \alpha \cdot n$}{
                break; \\
            }
            \If{$active\_idx[pos]==0$ and $pos<\sqrt{K}-1$}{
                $active\_clusters.push(pair<dists_1[idx_1[pos+1]]+dists_2[idx_2[0]], pos+1>)$; \\
            }
            $active\_clusters.pop()$; \\
            \If{$active\_idx[pos]<\sqrt{K}-1$}{
                $active\_idx[pos]$++; \\
                $selected\_cell.first \leftarrow dists_1[idx_1[pos]]+dists_2[idx_2[active\_idx[pos]]]$; \\
                $active\_clusters.push(selected\_cell)$; \\
            }
        }
        
	\Return $retrieved\_clusters$; \\
\end{algorithm}

% \subsection{Query Answering} \label{query_phase}

% As illustrated in Figure~\ref{overview_query}, the query processing pipeline~consists of three core stages:
% (1) collision counting via the inverted multi-index in each subspace (Algorithm~\ref{scalable_dynamic_activation});
% (2) candidate selection based on the aggregated collision distribution across all subspaces (Algorithm~\ref{candidate_selection}); and
% (3) refinement and extraction of the final $k$-ANNS results (Algorithm~\ref{kann_query}).

\subsection{Collision Counting}
\label{collision_counting}

The \emph{Dynamic Activation} algorithm was proposed along with SuCo to facilitate efficient collision counting~\cite{wei2025subspace}. 
However, its scalability remains limited.
The algorithm maintains an activation list via a linear array, resulting in $\mathcal{O}(l)$ query complexity and $\mathcal{O}(1)$ update complexity, where $l = \sqrt{K}$ denotes the length of the IMI list.
While this design offers satisfactory performance when $l$ is small, it becomes a bottleneck when a larger $l$ is required to improve indexing precision.
The linear query complexity constrains overall efficiency under such conditions.
To address this limitation, we propose the \emph{Scalable Dynamic Activation} algorithm (Algorithm~\ref{scalable_dynamic_activation}).
Experimental results in Section~\ref{imiquery_compare} demonstrate that the \emph{Scalable Dynamic Activation} algorithm improves efficiency by up to 30\% compared to the original \emph{Dynamic Activation} algorithm.
The \emph{Scalable Dynamic Activation} algorithm is not only applicable to TaCo, but can also be migrated to \emph{any} method that uses the IMI index structure.

Algorithm~\ref{scalable_dynamic_activation} gives the pseudocode of the \emph{Scalable Dynamic Activation} algorithm.
% which efficiently retrieves clusters from the IMI that collide with a query~$q$, based on a predefined collision ratio~$\alpha$.
% Given the distances between the query and all centroids in the two subspaces, $dists_1,dists_2$, along with the corresponding index arrays $idx^{i}_1,idx^{i}_2$ that record the sorted order of these distances, the algorithm systematically combines clusters from both subspaces. 
% Each combined cluster in IMI is evaluated using the sum of the distances from the two subspaces, with smaller sums having higher priority for retrieval.
It begins by pushing the initial cluster pair with the smallest distance sum into a min heap (line~3). 
In each iteration, the cluster combination with the smallest current distance is popped from the min heap and added to the result set (lines~6-9). 
The number of data points retrieved is accumulated, and the algorithm terminates once the accumulated count meets the threshold (lines~10-11).
If the current cluster from the first subspace is being activated for the first time and subsequent clusters remain available, the next cluster is activated and pushed into the min heap (lines~12-13). 
The current cluster’s activation index is incremented, and its next combination in the second subspace is computed and pushed back into the min heap for further evaluation (lines~14-18). 
The algorithm returns all retrieved clusters once the collision requirement is satisfied (line~19).
This mechanism ensures that cluster combinations are retrieved in ascending order of their distance sums, enabling \emph{early termination} when sufficient collisions are collected.

The main difference between TaCo and SuCo in the \emph{Collision Counting} step is that TaCo adopts a min-heap to manage the activation list in its \emph{Scalable Dynamic Activation} algorithm, achieving $\mathcal{O}(1)$ query complexity, thereby addressing the linear query complexity problem faced by SuCo's \emph{Dynamic Activation} algorithm. 
This new data structure introduces corresponding changes to the algorithm logic, including the use of a pair structure (lines 3, 6, 13, and 17 in Algorithm~\ref{scalable_dynamic_activation}) and the management of the heap (lines 3, 5, 13, and 18 in Algorithm~\ref{scalable_dynamic_activation}).
The logic of sequentially retrieving data points by dynamically activating cells in the IMI is the same for both TaCo and SuCo.

\subsection{Candidate Selection}

\subsubsection{Overview}

The query strategies of SuCo are \emph{query-oblivious}, employing same query overhead for all queries~\cite{wei2025subspace}.
To enhance query efficiency, we aim to design \emph{query-aware} query strategies for TaCo.
The entire query pipeline offers two opportunities for per-query customization: collision counting and candidate selection.
Collision counting already leverages an \emph{early-termination} strategy through the \emph{Scalable Dynamic Activation} algorithm, achieving optimal overhead adaptation for distinct queries.
However, designing a \emph{query-aware} candidate selection strategy remains an open problem.
Through statistical analysis of intermediate query states (specifically, the distribution of SC-scores), we observe that certain queries exhibit highly discriminative SC-score distributions, where high recall can be achieved by re-ranking only the top-scoring data points. 
In contrast, other queries show poor SC-score discriminability, requiring a larger candidate set for re-ranking to achieve comparable recall.
Thus, unlike SuCo, which selects a fixed number $\beta \cdot n$ of candidates for all queries, the \emph{Query-aware Candidates Selection} algorithm designed for TaCo dynamically determines the candidate set size per query based on its SC-score distribution.

\begin{algorithm}[tb]
% \small
% \footnotesize
	\caption{Query-aware Candidates Selection}
	\label{candidate_selection}
	\LinesNumbered
	\KwIn{Dataset $\mathcal D = \{o_1,o_2,\ldots,o_n\}$, SC-score distribution: $SC\_scores$, re-rank ratio $\beta$, subspace number $N_s$, dataset size $n$}
	\KwOut{Selected candidates $\mathcal C$ and the candidate number}
        Initialize an empty candidate set $\mathcal C \leftarrow \varnothing$; \\
        Initialize an array $collision\_num$ of size $N_s + 1$ and set to 0; \\
        \For{$i=1$ to $n$}{
            $collision\_num[SC\_scores[i-1]]$++; \\
	   }
       $last\_collision \leftarrow N_s$; \\
       $candidate\_num \leftarrow 0$; \\
       \For{$j=N_s$ to $0$}{
       $candidate\_num$ += $collision\_num[j]$; \\
            \If{$collision\_num[j] \leq \beta \cdot n - candidate\_num$}{
                $last\_collision$ -= 1; \\
            }\Else{
                break; \\
            }
	   }
       \For{$i=1$ to $n$}{
             \If{$SC\_scores[i-1] \geq last\_collision$}{
                $\mathcal C= \mathcal C \cup \{o_i\}$; \\
            }
	   }
	\Return the candidate set $\mathcal C$ and $candidate\_num$; \\ 
\end{algorithm}

\subsubsection{Algorithm Design}

Algorithm~\ref{candidate_selection} provides the pseudocode of the \emph{Query-aware Candidates Selection} algorithm, which identifies candidate points based on the collision distribution of each query.
The algorithm begins by initializing an empty candidate set $\mathcal C$ and an array $collision\_num$ of size $N_s + 1$ to count the number of data points for each possible SC-score (lines~1-2). 
It then iterates through all data points to populate $collision\_num$ according to their SC-scores (lines~3-4).
Next, the algorithm determines a threshold $last\_collision$ starting from the maximum possible score $N_s$ (line~5). 
It selects candidates from higher to lower SC-scores (lines~7-8) until the accumulated number of candidates exceed $\beta \cdot n$ (lines~9-12). 
The algorithm collects all data points whose SC-scores are no less than $last\_collision$, forming the candidate set $\mathcal C$ (lines~13-15). 
Finally, it returns $\mathcal C$ together with the total number of candidates (line~16).

\subsubsection{Theoretical Analysis}

We now provide a theoretical justification for how the query discriminability affects SC-score distribution and leads to different candidate selection behaviors.

\begin{definition}[Query Discriminability]%[Query Discriminability]
\label{def:quer_discriminability}
Given a query $q$ and the number of subspaces $N_s$.
As a consequence of the uniformity of the subspaces after the subspace-oriented data transformation in Theorem~\ref{theo:main}, we denote $p^*$ the \emph{collision probability} such that subspace collision happens for a true nearest neighbor $o^* \in \mathcal{R}^*= \{o^*_1,\ldots,o^*_k\}$, and $p$ the probability that subspace collision happens for a non-neighbor $o \notin \mathcal{R}^*$, with $p^* > p$.
The Discriminability Gap of $q$ is defined as $\Delta(q) = p^* - p > 0$.
% Benefiting from the subspace-oriented data transformation (Theorem~\ref{theo:main}), each subspace is uniform and independent.
% Let $p_i$ be the collision probability in the $i$-th subspace. We define $p = \frac{1}{N_s} \sum p_i$ as the expected average collision probability, which characterizes the aggregate behavior of the query.
% Let $p^*$ and $p$ be the average collision probabilities for a nearest neighbor $o^* \in \mathcal{R}^*= \{o^*_1,\ldots,o^*_k\}$ and a non-neighbor $o \notin \mathcal{R}^*$ over $N_s$ subspaces, respectively. The Discriminability Gap is defined as $\Delta(q) = p^* - p$.
% Given a maximum tolerance for classification error $\delta \in (0, 1)$, we derive a Minimum Separation Threshold $\tau_\delta = \sqrt{\frac{2 \ln(1/\delta)}{N_s}}$ based on concentration bounds. 
% We say that $q$ is \emph{strongly discriminative} if $\Delta(q) \ge \tau_\delta$, and $q$ is \emph{weakly discriminative} if $\Delta(q) < \tau_\delta$.
\end{definition}
Intuitively, for queries with larger discriminability gap $\Delta(q)$, the SC-score of a true nearest neighbor $SC(o)$ and that of a non-neighbor $SC(o^*)$ are further separated than queries with smaller discriminability gap.
As a consequence, the Type-I error (i.e., the probability that a non-neighbor is incorrectly classified as a neighbor) and Type-II error (i.e., the probability that a true neighbor is incorrectly classified as a non-neighbor) are expected to decrease rapidly as $\Delta(q)$ grows large.
This intuition is made precise in the following result.
\begin{lemma}[SC-score Separation] \label{lemma:separation}
For a given query $q$ and its discriminability gap $\Delta(q)$ as in Definition~\ref{def:quer_discriminability}, denote $o^*$ a true nearest neighbor of $q$ and $o \notin \mathcal{R}^*$ a non-neighbor.
Then, the optimal Type-I and Type-II errors both decay at least at the rate $\exp\left( - \frac{N_s \Delta^2(q)}{8p(1-p)} \right)$.
That is, both errors decay exponentially fast to zero as $N_s$ or $\Delta(q)$ becomes large.
% Then, we have that $SC(o)$ the SC-score of $o$ follows a binomial distribution with parameters $N_s$ and $p$, and $SC(o^*)$ the SC-score of $o^*$ follows a binomial distribution with parameters $N_s$ and $p^*$.
% In this case, the Bayes optimal decision threshold (with equal priors and equal costs) is given by $\frac{-N_s\log\frac{1-p^*}{1-p}} {\log\Big(\frac{p^*(1-p)}{p(1-p^*)}\Big)}$, and the corresponding Type-I and Type-II error given by XX and XX, respectively.
% $\left\lceil \frac{-N_s\log\frac{1-p^*}{1-p}} {\log\Big(\frac{p^*(1-p)}{p(1-p^*)}\Big)} \right\rceil.$
% Let $SC(o)$ be the SC-score of a point $o$. We analyze the probability of classification error $\Pr(\text{Error})$ defined as the overlap between the lower tail of neighbors and the upper tail of non-neighbors at the optimal decision boundary $t = \frac{N_s(p+p^*)}{2}$. The behavior differs based on discriminability:
% (1) If $q$ is strongly discriminative, the overlap probability is bounded exponentially by the square of the gap, ensuring minimal error:$\Pr(\text{Error}) \le \exp\left(-\frac{1}{2} N_s \Delta(q)^2\right) \le \delta$. This implies that high SC-scores are dominated almost exclusively by true neighbors.
% (2) If $q$ is weakly discriminative, the bound becomes loose:$\delta < \exp\left(-\frac{1}{2} N_s \Delta(q)^2\right) \le 1$. Specifically, if $\Delta(q) \to 0$ or $\Delta(q) \le 0$, the exponential term approaches 1, indicating that $SC(o^*)$ and $SC(o)$ are statistically mixed, and a simple threshold $t$ cannot effectively separate them.
\end{lemma}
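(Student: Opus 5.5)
We model the SC-scores as binomial counts and use a midpoint threshold together with a Chernoff/Bernstein-type tail bound. By the uniformity of subspaces after the transformation (Theorem~\ref{theo:main}), we treat the $N_s$ collision indicators of a point as independent Bernoulli variables. Under this modelling assumption, which is implicit in Definition~\ref{def:quer_discriminability}, we have $SC(o^*)\sim\mathrm{Bin}(N_s,p^*)$ and $SC(o)\sim\mathrm{Bin}(N_s,p)$. Any threshold rule of the form ``declare neighbor iff $SC\ge t$'' gives Type-I error $\Pr[SC(o)\ge t]$ and Type-II error $\Pr[SC(o^*)< t]$. Since the optimal (e.g.\ Bayes or minimax) test does at least as well as any specific threshold, it suffices to exhibit one threshold for which both errors are bounded by $\exp\!\left(-\frac{N_s\Delta^2(q)}{8p(1-p)}\right)$. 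We choose the midpoint $t=N_s\frac{p+p^*}{2}$, so that each error is a deviation of size $N_s\Delta(q)/2$ from the corresponding mean.

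\textbf{Type-I error.} We write $SC(o)-N_sp$ as a sum of centered Bernoulli variables with variance $p(1-p)$ and need $\Pr[SC(o)-N_sp\ge N_s\Delta/2]$. A variance-sensitive bound is required, since Hoeffding only yields $\exp(-N_s\Delta^2/2)$, which has no $p(1-p)$. We use the KL (Chernoff) form $\exp(-N_s\,\mathrm{KL}(p+\Delta/2\,\|\,p))$ and lower bound the KL divergence through the inequality $\mathrm{KL}(a\|p)\ge \frac{(a-p)^2}{2\max_{x\in[p,a]}x(1-x)}$. Alternatively, Bernstein gives $\exp\!\left(-\frac{(N_s\Delta/2)^2}{2(N_sp(1-p)+N_s\Delta/6)}\right)$. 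Either route produces an exponent of order $\frac{N_s\Delta^2}{8p(1-p)}$.

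\textbf{Type-II error.} We bound $\Pr[SC(o^*)<t]$, a lower-tail deviation of $N_s\Delta/2$ around mean $N_sp^*$, the same way. The natural variance here is $p^*(1-p^*)$, so the exponent must then be related to $p(1-p)$.

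The main obstacle is obtaining exactly the clean constant $8p(1-p)$, with $p$ rather than $p^*$ and without extra $\Delta$ terms. The exact Chernoff exponent involves $\max_{x\in[p,p^*]}x(1-x)$, which can exceed $p(1-p)$. I would therefore first try to state the result in the regime where it is cleanly true: small gap $\Delta\ll p(1-p)$, or $p^*\le 1/2$ so that $x(1-x)$ is monotone on the interval. Beyond that, I would interpret ``decay at least at the rate'' up to constants and lower-order terms, noting that a second-order expansion of KL around $p$ gives $\frac{\Delta^2}{8p(1-p)}(1+O(\Delta))$ at the midpoint. Failing a fully rigorous uniform constant, I would present the bound as an asymptotic (small-$\Delta$) rate, and state separately the rigorous Hoeffding bound $\exp(-N_s\Delta^2/2)$, which already establishes the qualitative claim of exponential decay in $N_s$ and $\Delta(q)$.
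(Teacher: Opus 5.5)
Your argument is essentially the paper's. The paper also takes $SC(o)\sim\mathrm{Bin}(N_s,p)$ and $SC(o^*)\sim\mathrm{Bin}(N_s,p^*)$ as a consequence of Theorem~\ref{theo:main}, applies the relative-entropy Chernoff bound at a log-likelihood-balanced threshold (to leading order your midpoint), and gets the constant only from the small-gap expansion $D(p\|p^*)=\frac{\Delta^2(q)}{2p(1-p)}+O(\Delta^3(q))$ evaluated at half the gap, which is where the $8$ comes from. So your fallback of reading the lemma as a leading-order rate in $\Delta(q)$ is exactly the level at which the paper establishes it.

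One correction: $p^*\le 1/2$ is the wrong ``clean'' regime. On $[0,1/2]$ the function $x(1-x)$ is increasing, so $\max_{x\in[p,a]}x(1-x)>p(1-p)$ and your bound comes out weaker than claimed. The situation is in fact worse. Write $\mathrm{KL}(t\|p)=\int_p^t\frac{t-x}{x(1-x)}\,dx$ and $\mathrm{KL}(t\|p^*)=\int_t^{p^*}\frac{x-t}{x(1-x)}\,dx$. These show that whenever $p<1/2$ and $p^*\le 1-p$, even the optimal (Chernoff-information) exponent is strictly below $\frac{\Delta^2(q)}{8p(1-p)}$ for every $\Delta(q)>0$. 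For example, $p=0.1$, $p^*=0.3$ gives about $0.034$ versus $0.056$. Hence, in the regime you propose, the statement with this exact constant is false, and the same holds for small $p\approx\alpha$ whenever $p^*\le 1-p$. The constant is guaranteed uniformly only in the opposite case $p\ge 1/2$: there $x(1-x)\le p(1-p)$ on $[p,p^*]$, so both midpoint errors are at most $\exp\bigl(-\frac{N_s\Delta^2(q)}{8p(1-p)}\bigr)$.
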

\begin{proof}[Proof of Lemma~\ref{lemma:separation}]
As a consequence of Theorem~\ref{theo:main}, we have that $SC(o)$ the SC-score of $o$ follows a binomial distribution with parameters $N_s$ and $p$, and $SC(o^*)$ the SC-score of $o^*$ follows a binomial distribution with parameters $N_s$ and $p^*$.
The associated Bernoulli KL divergence writes, for small $\Delta(q)$ as
\begin{equation}
    D(p||p^*) = \frac{\Delta^2(q)}{2p(1-p)} + O(\Delta^3(q)).
\end{equation}
Using the Bayes optimal decision rule (that balances the two log-likelihoods) and applying the (relative entropy) Chernoff bound, we conclude the proof of Lemma~\ref{lemma:separation}.
% we have that $SC(o) \sum_{i=1}^{N_s} X_i$, with $X_i$ independent Bernoulli trials $X_1, \dots, X_{N_s}$. Note that while each $p_i$ may vary slightly, Hoeffding’s Inequality holds for any sum of independent bounded variables, allowing us to bound the deviation from the aggregate mean $N_s p$.
% Let the decision threshold be the midpoint $t = \frac{N_s(p + p^*)}{2}$. The deviation margin is $\epsilon = \frac{N_s (p^* - p)}{2} = \frac{N_s \Delta(q)}{2}$.
% Assuming $\Delta(q) > 0$, the probability of a non-neighbor crossing $t$ (False Positive) or a neighbor falling below $t$ (False Negative) is bounded by:
% \begin{equation}
% \Pr(\text{Error}) \le \exp\left(-\frac{2\epsilon^2}{N_s}\right) = \exp\left(-\frac{1}{2} N_s \Delta(q)^2\right).
% \end{equation}
% Substituting the condition for strong discriminability $\Delta(q) \ge \tau_\delta = \sqrt{\frac{2 \ln(1/\delta)}{N_s}}$:
% \begin{equation}
% \Pr(\text{Error}) \le \exp\left(-\frac{1}{2} N_s \cdot \frac{2 \ln(1/\delta)}{N_s}\right) = \exp(-\ln(1/\delta)) = \delta.
% \end{equation}
% Thus, the error is strictly controlled below $\delta$.
% If $\Delta(q) < \tau_\delta$ (weak discriminability), the gap is too small to satisfy the confidence level $\delta$.
% In the extreme case where $\Delta(q) \le 0$, the expected score of the signal is lower than or equal to the noise ($p^* \le p$). The midpoint logic collapses, and the "error" probability (overlap) essentially becomes the entire probability mass, approaching 1. Mathematically, as $\Delta(q) \to 0$, the bound $\exp(-\frac{1}{2} N_s \Delta(q)^2) \to e^0 = 1$.
\end{proof}
By operating directly on observable SC-score levels and scanning in descending order, Algorithm~\ref{candidate_selection} implements a query-driven stopping rule, where the decision to stop collecting candidates is governed solely by the available SC-score ordering and the refinement budget.
In summary, Algorithm~\ref{candidate_selection} is both theoretically justified by the SC-score separation properties and practically effective as it adapts to the intrinsic discriminability of each query without extra estimation overhead.

\begin{algorithm}[tb]
% \small
% \footnotesize
	\caption{$k$-ANNS Query}       
	\label{kann_query}
	\LinesNumbered
	\KwIn{Dataset $\mathcal D$, dataset size $n$, data dimensionality $d$, a query point $q$, number of results $k$, subspace number $N_s$, subspace dimensionality $s$, collision ratio $\alpha$, re-rank ratio $\beta$, number of K-means clusters $K$, the centroid list $centroids$, the IMI list $IMIs$}
	\KwOut{$k$ nearest points to $q$ in $\mathcal D$}
        % Initialize an array $SC\_scores$ of length $n$ and set to 0; \\
        Transform $q$ with the eigenvalues obtained from the \emph{Subspace-oriented Data Transformation} algorithm; \\
        Divide $q$ into $N_s$ subspaces: $q^1,\ldots,q^{N_s}$; \\
	\For{$i=1$ to $N_s$}{
            Further split $q^{i}$ into two parts $q^{i}_1$ and $q^{i}_2$; \\
            Calculate the distance between $q^{i}$ and centroids in each part, obtain $dists^{i}_1$, $dists^{i}_2$ and sorted indices $idx^{i}_1$, $idx^{i}_2$; \\
            $retrieved\_clusters \leftarrow$ \textbf{call} \emph{Scalable Dynamic Activation} $(\alpha,n,K,dists^{i}_1,idx^{i}_1,dists^{i}_2,idx^{i}_2,IMIs[i-1])$; \\
            $SC\_scores \leftarrow$ Derived from $retrieved\_clusters$; \\
	}
        $\mathcal C, candidate\_num \leftarrow $ \textbf{call} \emph{Query-aware Candidates Selection}$(\mathcal D, SC\_scores, \beta, N_s, n)$; \\
	\Return the refined \emph{top}-$k$ points closest to $q$ in $\mathcal C$; \\
\end{algorithm}

\subsection{$k$-ANNS Query}

Algorithm~\ref{kann_query} illustrates how TaCo supports \emph{$k$-ANNS} queries.
The procedure starts by transforming the query point $q$ using the eigenvalues obtained from the \emph{Subspace-oriented Data Transformation} algorithm, followed by partitioning the transformed query into $N_s$ subspaces (lines~1–2).
For each subspace, the distances between the query point and all centroids in both of its divided parts are calculated, and the indices corresponding to these distances sorted in ascending order are obtained (lines~3-5).
Subsequently, the \emph{Scalable Dynamic Activation} algorithm is invoked to retrieve clusters containing data points that collide with query $q$ in each subspace (line~6). 
The collision counts for all data points are then aggregated accordingly (line~7).
After processing all subspaces, the \emph{Query-aware Candidates Selection} algorithm is invoked to generate a candidate set based on the provided SC-score distribution (line~8). 
Finally, all candidates are refined by their exact distances to $q$, and the top-$k$ points with the smallest distances are returned (line~9).

Algorithm~\ref{kann_query} reflects the query pipeline under the subspace collision framework. 
Since both TaCo and SuCo are instances of this framework, their overall pipelines are structurally similar.
TaCo’s key innovation lies in the design of the internal modules within the pipeline (Algorithms~\ref{data_transformation}, ~\ref{scalable_dynamic_activation}, ~\ref{candidate_selection}), which leads to a significant performance gain over the query algorithm used in SuCo.

\section{Experimental Evaluation} \label{experimental_evaluation}

In this section, we evaluate the performance of TaCo through self-evaluations and comparative experiments against state-of-the-art ANNS methods, including both subspace collision-based and non-subspace collision-based methods. 
All methods are implemented in C/C++ and parallelized using OpenMP and/or Pthreads, with SIMD instructions utilized to accelerate computation. 
TaCo is developed in C++ and compiled with -O3 optimization.
% Our implementation of TaCo is developed in C++ and compiled with -O3 optimization, leveraging OpenMP for parallelization and SIMD for accelerating calculations.
All experiments are performed on a server with two AMD EPYC 9554 CPUs operating at 3.10 GHz and 756 GB of memory, running Ubuntu 22.04.

\subsection{Experimental Setup}

\noindent \textbf{Datasets and Queries.} 
We use five public real-world datasets with varying sizes and dimensionalities, which are widely adopted in ANNS studies~\cite{detlsh,wei2025subspace,malkov2018efficient,gao2024rabitq}.
The datasets include: 256-dimensional DEEP1M, 960-dimensional GIST1M, 128-dimensional SIFT10M, 96-dimensional Yandex DEEP10M, and 100-dimensional Microsoft SPACEV10M.
Note that SIFT10M, Yandex DEEP10M, and Microsoft SPACEV10M are randomly sampled subsets of their corresponding billion-scale datasets\footnote{http://corpus-texmex.irisa.fr; https://big-ann-benchmarks.com/neurips21.html}. 
For each dataset, we randomly select 100 points as query points and remove them from the original datasets.

\noindent \textbf{Benchmark Methods.}
We compare TaCo with eleven state-of-the-art in-memory ANN methods, covering both subspace collision-based and non-subspace collision-based methods.
(1) For subspace collision-based methods: \textbf{SuCo}~\cite{wei2025subspace} is the state-of-the-art method under the subspace collision framework before our work.
To evaluate the impact of the our proposed data transformation and query strategy optimizations on algorithm performance, we propose three ablation methods: \textbf{SuCo-DT} (SuCo with subspace-oriented data transformation), \textbf{SuCo-CS} (SuCo with query-aware candidates selection), and \textbf{SuCo-QS} (SuCo with all optimized query strategies).
\textbf{SC-Linear} serves as a baseline that performs linear scan without any index structure.
(2) For non-subspace collision-based methods: \textbf{IMI-OPQ}~\cite{ge2013optimized} uses the same IMI structure as TaCo and is recognized for its strong performance in the FAISS library~\cite{douze2024faiss}.
\textbf{DET-LSH}~\cite{detlsh} follows a query pipeline similar to TaCo, involving data projection, candidate selection, and final re-ranking.
\textbf{IVF-RaBitQ}~\cite{gao2025practical} is the state-of-the-art quantization method, achieves unbiased estimation and has been widely used in the industry.
\textbf{HNSW}~\cite{malkov2018efficient} is a widely adopted graph-based method known for its effectiveness in query processing.
\textbf{MIRAGE}~\cite{voruganti2025mirage} and \textbf{SHG}~\cite{gong2025accelerating} are optimized based on HNSW and demonstrate strong performance.

\noindent \textbf{Evaluation Measures.}
We evaluate all methods using six performance measures: indexing time, index memory footprint, query time, queries per second (QPS), recall, and mean relative error (MRE)~\cite{patella2008many,patella2009approximate,aumuller2020ann}. 
Among these, indexing time, query time, and QPS reflect efficiency; index memory footprint indicates storage overhead; and recall and MRE assess the quality of the returned results.
Note that offline data preprocessing steps (e.g., data projection for DET-LSH, data rotation for IMI-OPQ, and data transformation for TaCo) are not included in the indexing time, which is a common practice in previous works for evaluating experiments~\cite{li2025saq,detlsh,dblsh,ge2013optimized}.
For a query $q$, let $\mathcal R=\{o_1,\ldots,o_k\}$ be the returned result set and $\mathcal R^*=\{o_1^*,\ldots,o_k^*\}$ be the exact $k$-NN set. Recall is defined as $\frac{\lvert \mathcal R \cap \mathcal R^* \rvert}{k}$, and mean relative error (MRE) is defined as $\frac{1}{k} \sum_{i=1}^{k} \frac{\left\|q,o_i\right\|-\left\|q,o_i^*\right\|}{\left\|q,o_i^*\right\|}$.

\noindent \textbf{Parameter Settings.} 
Both the indexing and query phases of ANNS methods involve configurable parameters. 
For indexing parameters, we adhere to the optimal configurations recommended in previous studies~\cite{echihabi2018lernaean,hydra2,li2019approximate,detlsh,wei2025subspace,wang2021comprehensive} and empirically validate these configurations to ensure all methods achieve optimal indexing performance. 
For query parameters, we dynamically adjust their values to examine the trade-offs between query efficiency and accuracy, enabling a comprehensive evaluation across different settings.
For TaCo, SuCo, SuCo-DT, SuCo-CS, and SuCo-QS, $N_s\in [4,10]$, $s\in [6,12]$, $\alpha \in \left[0.01, 0.1\right]$, $\beta \in \left[0.001, 0.05\right]$.
For IMI-OPQ, $M=2$, $K=2^{18}$, $\beta \in \left[10^{-4},10^{-1}\right]$.
For DET-LSH, $L=4$, $K=16$, $\beta \in \left[0.005,0.2\right]$, $c=1.5$.
For IVF-RaBitQ, we adopt the default settings of the RaBitQ library, i.e., $K = 2^{12}$ and $B = 3$.
For HNSW, $efConstruction=200$, $M=25$, $efSearch \in \left[300,3000\right]$.
For MIRAGE, $S=32$, $R=4$, $iter=15$.
For SHG, $M=48$, $efConstruction=80$.
The value of $k$ in $k$-ANNS is set to 50 by default.

\begin{table}
% \small
	\centering
	\caption{Comparison on TaCo and SC-Linear}
  % \vspace*{-0.2cm}
	\label{scevaluation}
\begin{tabular}{@{}ccccc@{}}
\toprule
\textbf{Method}                               & \textbf{Dataset} & \textbf{Query Time (ms)} & \textbf{Speedup} & \textbf{Recall} \\ \midrule
\multirow{2}{*}{SC-Linear}                    & DEEP1M          & 273.7                  & \textbackslash{} & 0.968           \\
                                              & SIFT10M         & 4786.1                  & \textbackslash{} & 0.9906           \\ 
\multicolumn{1}{c}{\multirow{2}{*}{TaCo}} & DEEP1M          & 1.266                    & 216.2            & 0.9358          \\
\multicolumn{1}{c}{}                        & SIFT10M         & 6.705                   & 713.8            & 0.9726          \\ \bottomrule
\end{tabular}
% \vspace*{-0.2cm}
\end{table}

\begin{table}
% \small
% \footnotesize
	\centering
	\caption{Parameter settings for each dataset}
  % \vspace*{-0.1cm}
	\label{parameter_settings}
	\begin{tabular}{ccccc}
		\toprule
		\textbf{Dataset} & \textbf{Dim.} & \textbf{$N_s$} & \textbf{$s$} & \textbf{Dim. Reduction}\\
		\midrule
		DEEP1M & 256  & 6 & 8 & 81.25\%\\
        GIST1M & 960  & 4 & 10 & 95.83\%\\
		SIFT10M & 128  & 6 & 6 & 71.88\%\\
        Yandex Deep10M & 96 & 6 & 8 & 50\%\\
        Microsoft SPACEV10M & 100 & 6 & 10 & 40\%\\
		\bottomrule
	\end{tabular}
 % \vspace*{-0.2cm}
\end{table}

\subsection{Self-evaluation of TaCo}

\subsubsection{TaCo vs. SC-Linear}
We first compare the query performance between TaCo and SC-Linear under same parameter settings ($\alpha = 0.05$, $\beta = 0.005$). 
As shown in Table~\ref{scevaluation}, TaCo achieves a significant improvement in query efficiency: exceeding $\mathbf{700\times}$ speedup on the SIFT10M dataset, with only a marginal decrease in recall. 
The speedup ratio increases substantially with the size of the dataset, demonstrating that TaCo has strong scalability.
These results confirm that TaCo's index structure and query strategies can be effectively applied to the subspace collision framework for ANNS.

\subsubsection{Scalable Dynamic Activation vs. Dynamic Activation} \label{imiquery_compare}
The \emph{Scalable Dynamic Activation} algorithm, introduced in Section~\ref{collision_counting}, addresses scalability issues in IMI collision counting.
Figure~\ref{imi_query} compares the efficiency of the \emph{Scalable Dynamic Activation} and the original \emph{Dynamic Activation} algorithms on the SIFT10M dataset under varying numbers of clusters $K$ and collision ratios $\alpha$.
When $K$ is small, both algorithms perform similarly, with the \emph{Dynamic Activation} algorithm even slightly outperforming in some cases. 
However, as $K$ increases, the \emph{Scalable Dynamic Activation} algorithm exhibits a notable performance improvement, achieving up to a 30\% reduction in query time. 
This gain comes from its min-heap structure, which reduces query complexity to $\mathcal{O}(1)$ versus the original's $\mathcal{O}(\sqrt{K})$. 
For small $K$, heap operations (push and pop) slightly reduce performance due to update overhead. 
As $K$ increases, the linear query cost of \emph{Dynamic Activation} becomes a bottleneck, while the constant-time queries of \emph{Scalable Dynamic Activation} provide significant advantages, confirming its superior scalability.

\begin{figure}[tb] 
% \vspace*{-0.3cm}
	\centering
	\includegraphics[width=0.6\linewidth]{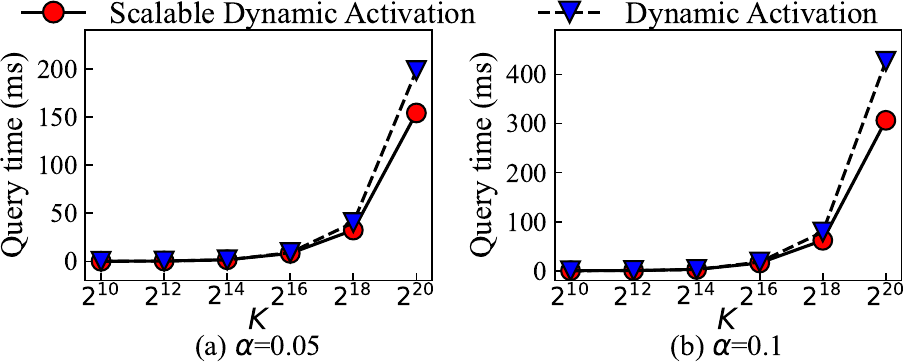}
 % \vspace*{-0.2cm}
	\caption{Efficiency comparison between Scalable Dynamic Activation and Dynamic Activation algorithms on SIFT10M.}
	\label{imi_query}
  % \vspace*{-0.2cm}
\end{figure}

\begin{figure*}[tb] 
% \vspace*{-0.3cm}
	\centering
	\includegraphics[width=\linewidth]{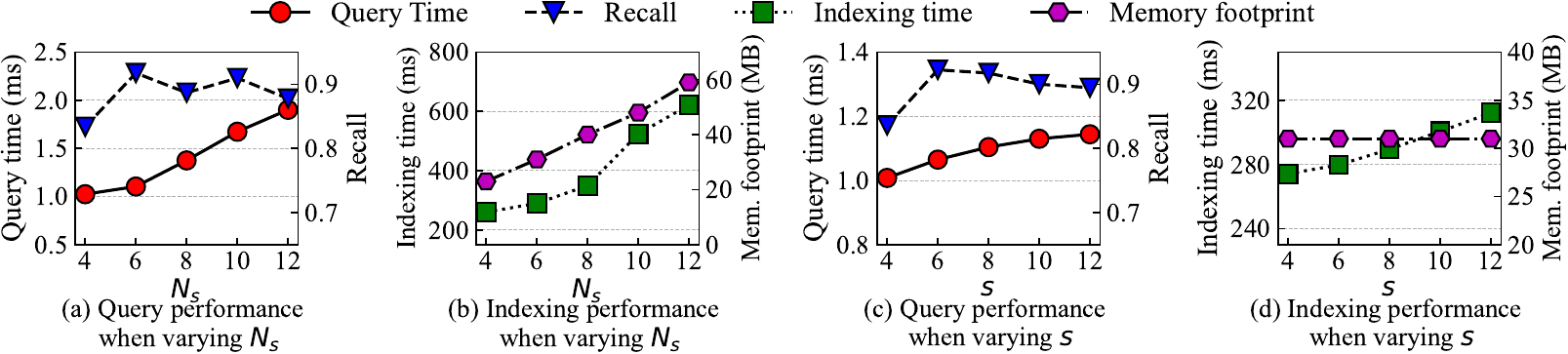}
 % \vspace*{-0.2cm}
	\caption{Performance of TaCo when varying the number of subspaces $N_s$ and the subspace dimensionality $s$ on DEEP1M.}
	\label{parameter_study}
 % \vspace*{-0.1cm}
\end{figure*}

\subsubsection{Parameter study}

Parameters $N_s$ and $s$ critically affect the indexing and query performance of TaCo.
As shown in Figure~\ref{parameter_study} (a) and (b), increasing $N_s$ generally leads to higher indexing time and memory consumption. 
While query performance improves initially with $N_s$ and reaches an optimum at $N_s=6$, beyond which further increasing $N_s$ leads to a gradual rise in query time and may even reduce recall.
As shown in Figure~\ref{parameter_study} (c) and (d), increasing $s$ leads to higher computational costs, consequently resulting in increased indexing and query time.
% In contrast to $N_s$, the memory footprint is less sensitive to changes in $s$.
Furthermore, recall reaches its optimal value at $s=6$.
Table~\ref{parameter_settings} summarizes the optimal parameter settings of TaCo for each dataset, where $1-\frac{N_s \cdot s}{d}$ represents the dimensionality reduction rate.
It can be observed that the dimensionality reduction rates achieved across datasets are impressive, ranging from 40\% to as high as 95.83\%.
The dimensionality reduction is a key factor contributing to TaCo's superior indexing and query efficiency compared to subspace collision-based methods.

\subsection{TaCo vs. Subspace Collision Methods} \label{taco_vs_sc}

% In this section, we compare TaCo with three subspace collision-based methods: SuCo, SuCo-DT, and SuCo-QS.
% SuCo is the state-of-the-art subspace collision-based method.
% To evaluate the impact of the our proposed data transformation and query strategy optimizations, we conduct an \emph{ablation study} by integrating each of these two optimizations separately with SuCo, resulting in two ablation variants: SuCo-DT (SuCo with subspace-oriented data transformation) and SuCo-QS (SuCo with the optimized query strategy).

% This section evaluates TaCo against four subspace collision-based methods.
% We employ SuCo, the current state-of-the-art subspace collision method, as our baseline.
% To systematically evaluate our proposed data transformation and query strategy optimizations, we conduct an \emph{ablation study} by integrating each component separately with SuCo, yielding three ablation variants: SuCo-DT (with data transformation), SuCo-CS (with query-aware candidates selection) and SuCo-QS (with all optimized query strategy).

\subsubsection{Indexing performance}

Figure~\ref{indexing_sc} shows the indexing time and index memory footprint of all subspace collision-based methods.
Note that the indexing time does not include preprocessing time.
Owing to the use of subspace-oriented data transformation in both TaCo and SuCo-DT, these two methods achieve same indexing performance. 
Similarly, SuCo-QS, SuCo-CS, and SuCo, which do not incorporate this transformation, also exhibit same indexing performance.
Therefore, we focus our analysis on comparing TaCo and SuCo.
Experimental results indicate that TaCo achieves up to $\mathbf{8\times}$ speedups in indexing while consuming only $\mathbf{0.6\times}$ memory footprint when compared with SuCo. 
This significant improvement can be attributed to three main factors: dimensionality reduction, fewer subspaces, and code-level optimizations. 
These results collectively confirm the effectiveness of the proposed subspace-oriented data transformation in enhancing indexing performance.

\subsubsection{Query performance}

In $k$-ANNS tasks, a fundamental trade-off exists between query efficiency and accuracy: longer query processing times generally yield higher-quality results, whereas an excessive focus on efficiency may compromise result quality. Therefore, a comprehensive evaluation of query performance must consider both efficiency (measured by QPS) and accuracy (measured by recall and MRE).
Figure~\ref{recall_qps_mre_sc} presents the recall-QPS and MRE-QPS curves for all subspace collision-based methods. 
The results show that TaCo achieves over $\mathbf{1.5\times}$ QPS at 0.9 recall when compared with SuCo.
Furthermore, SuCo-DT, SuCo-CS, and SuCo-QS exhibit better query performance than the original SuCo method.
These ablation experiments confirm that the three optimization techniques we proposed can effectively improve query performance.

\begin{figure}[tb] 
% \vspace*{-0.2cm}
	\centering
	\includegraphics[width=0.9\linewidth]{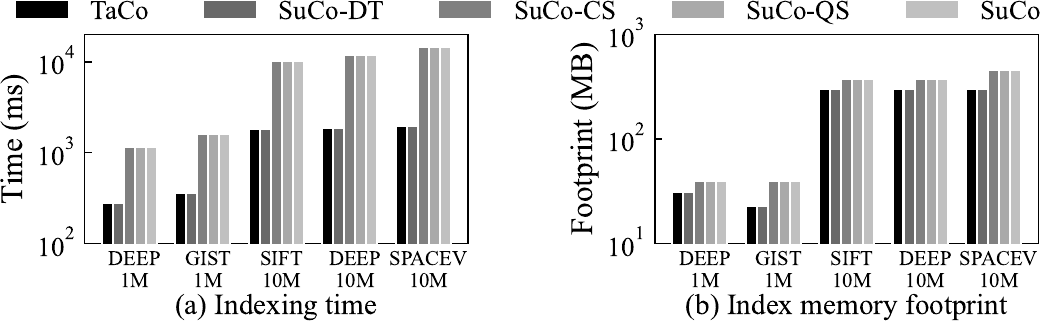}
	\caption{Indexing performance comparison between TaCo and subspace collision-based methods.}
	\label{indexing_sc}
 % \vspace*{-0.1cm}
\end{figure}

\begin{figure*}[tb] 
% \vspace*{-0.3cm}
	\centering
	\includegraphics[width=\linewidth]{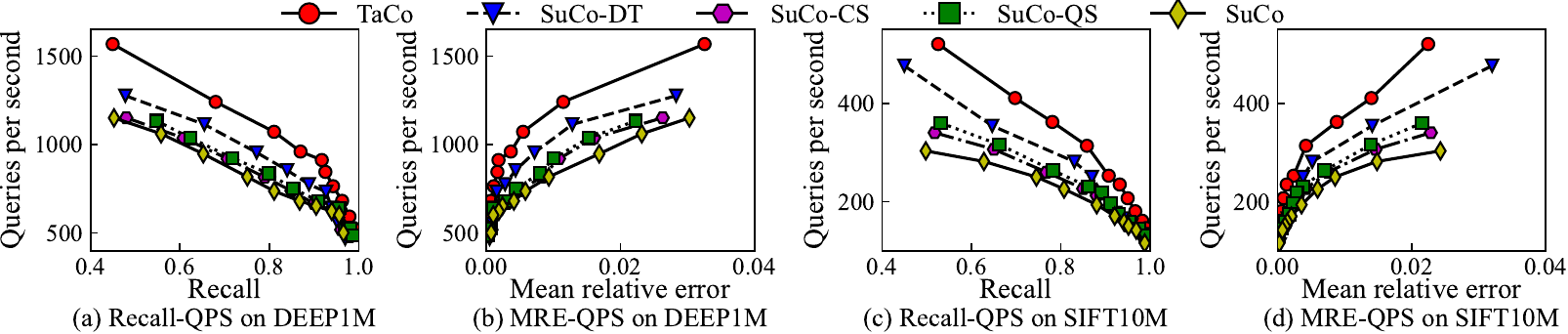}
 % \vspace*{-0.3cm}
	\caption{Query performance comparison between TaCo and subspace collision-based methods.}
	\label{recall_qps_mre_sc}
 % \vspace*{-0.1cm}
\end{figure*}

\begin{figure*}[tb] 
	\centering
	\includegraphics[width=\linewidth]{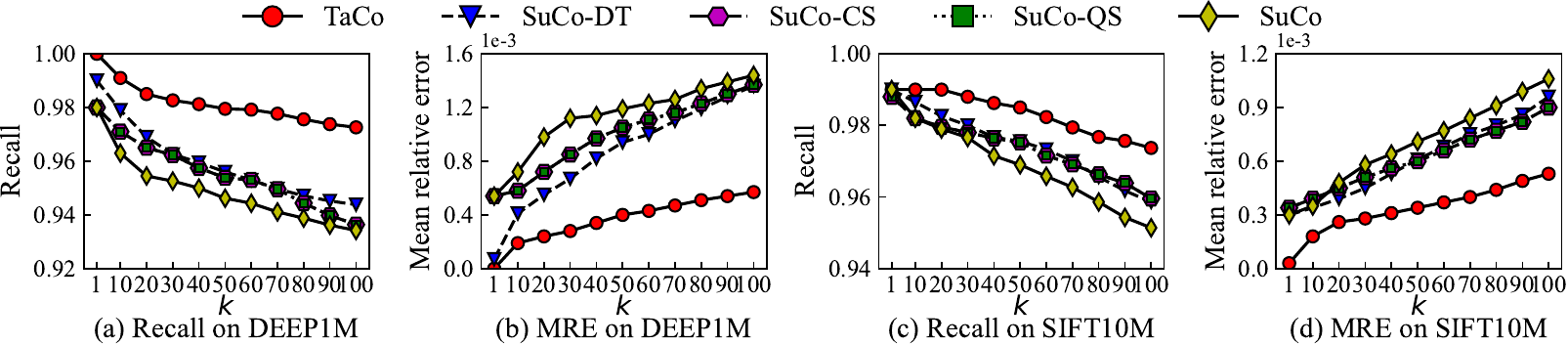}
	\caption{Performance under different $k$: TaCo vs. subspace collision-based methods.}
	\label{diffk}
\end{figure*}

\subsubsection{Performance under different $k$}
% In $k$-ANNS tasks, the parameter $k$ denotes the number of nearest neighbors to be retrieved. 
% A larger $k$ indicates a more difficult query task. 
The ability of a $k$-ANNS method to maintain stable performance across varying values of $k$ is an important aspect of its scalability.
Figure~\ref{diffk} illustrates the query performance of all subspace collision-based methods under different $k$ values ranging from 1 to 100.
SuCo-CS and SuCo-QS overlap in the figure because, under the same parameter settings, the Scalable Dynamic Activation and the Dynamic Activation algorithms return identical results and therefore do not affect recall.
As $k$ increases, all methods exhibit a slight decline in query accuracy. 
This is primarily because the number of candidate points remains fixed irrespective of $k$, a larger $k$ thus increases the probability of missing true nearest neighbors. 
Nevertheless, TaCo consistently outperforms all other subspace collision-based methods across all evaluated settings of $k$, demonstrating its robustness and superior scalability.

\subsection{TaCo vs. Non-Subspace Collision methods}

% In this section, we compare TaCo with three non-subspace collision-based methods: IMI-OPQ (VQ-based), DET-LSH (LSH-based), and HNSW (graph-based).
% Different categories of ANNS methods exhibit distinct strengths and weaknesses in indexing and query performance, which makes direct comparisons challenging. Nevertheless, we provide a comparative analysis between TaCo and other categories of ANNS methods to provide readers with an intuitive and comprehensive perspective.

This section presents a comparative analysis between TaCo and six representative methods from different ANNS paradigms: IMI-OPQ and IVF-RaBitQ (VQ-based); DET-LSH (LSH-based); HNSW, MIRAGE, and SHG (graph-based). 
Although cross-paradigm comparisons are non-intuitive due to fundamental architectural differences, this evaluation offers valuable insights into TaCo's competitive positioning relative to non-subspace collision methods.

\begin{figure*}[tb] 
	\centering
	\includegraphics[width=0.9\linewidth]{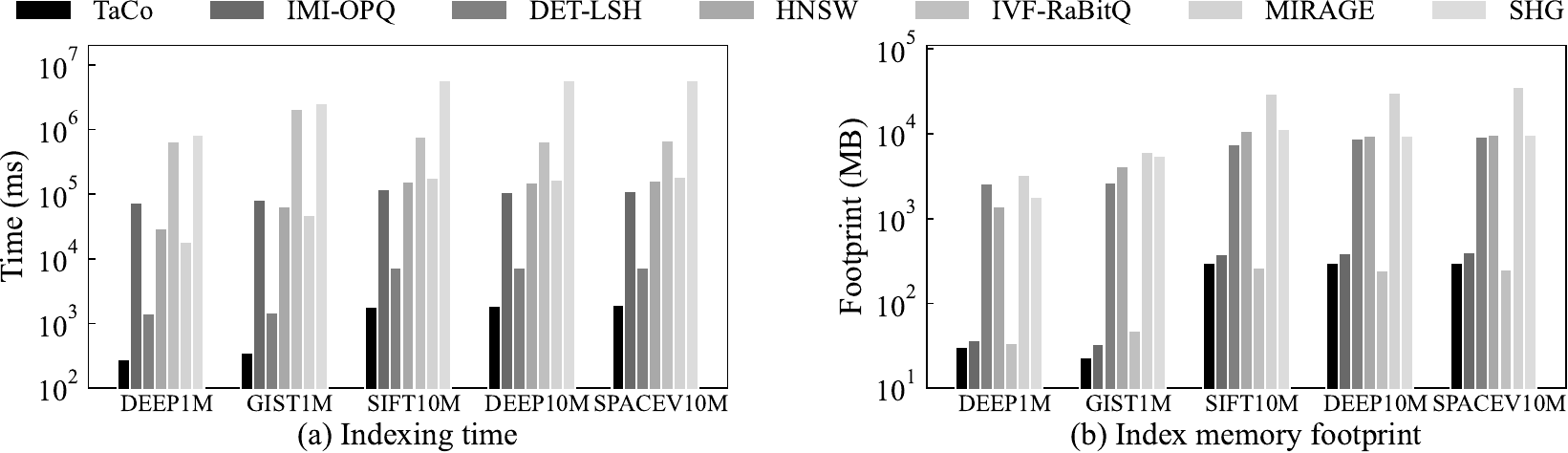}
	\caption{Indexing performance comparison between TaCo and non-subspace collision-based methods.}
	\label{indexing_nonsc}
\end{figure*}

\begin{figure*}[tb] 
% \vspace*{-0.3cm}
	\centering`
	\includegraphics[width=\linewidth]{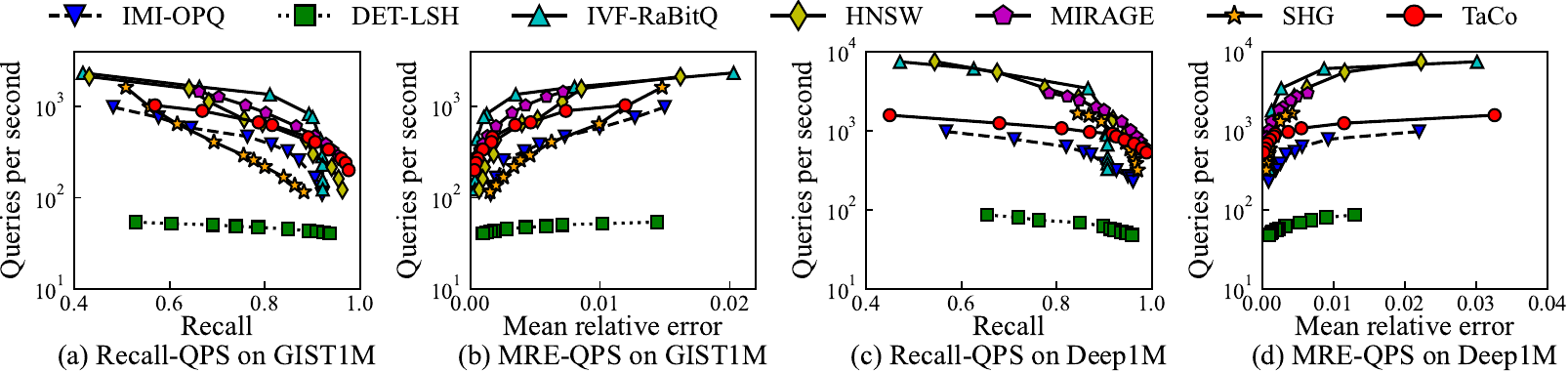}
 % \vspace*{-0.3cm}
	\caption{Query performance comparison between TaCo and non-subspace collision-based methods.}
	\label{recall_qps_mre_nonsc}
 % \vspace*{-0.1cm}
\end{figure*}

\subsubsection{Indexing performance} \label{index_nonsc}

Figure~\ref{indexing_nonsc}(a) shows that TaCo achieves the best indexing efficiency, while Figure~\ref{indexing_nonsc}(b) shows that TaCo, IMI-OPQ, and IVF-RaBitQ achieve the lowest memory consumption among all evaluated methods.
Note that the reported indexing time excludes preprocessing, and the index memory footprint does not include the storage cost of the dataset.
TaCo, IMI-OPQ, and IVF-RaBitQ utilize the inverted file-based structure (IMI and IVF), whose lightweight design lead to significantly lower memory usage compared to DET-LSH (tree-based index) and HNSW, MIRAGE, SHG (graph-based index). 
However, due to differences in design objectives, TaCo performs coarse-grained collision probing within each subspace under the subspace collision framework, while IMI-OPQ and IVF-RaBitQ require fine-grained partitioning of all data points in the original space. 
As a result, compared with IMI-OPQ and IVF-RaBitQ, TaCo reduces indexing time by 2\textasciitilde3 orders of magnitude.
In contrast, DET-LSH relies on fine-grained tree-based partitioning, and HNSW, MIRAGE, SHG connect each point to its nearest neighbors in a graph structure. 
These approaches result in heavier index structures, requiring longer indexing times and larger memory footprints.
In summary, TaCo achieves superior indexing performance due to its efficient subspace-oriented design and lightweight index.

\subsubsection{Query performance}
Figure~\ref{recall_qps_mre_nonsc} presents the recall–QPS and MRE–QPS curves for TaCo and non-subspace collision-based methods.
Experimental results indicate that IVF-RaBitQ, HNSW, and MIRAGE perform well at moderate recall levels (0.6\textasciitilde0.9), whereas TaCo exhibits better stability at higher recall levels (>0.9), particularly on the hard datasets such as GIST1M.
Graph-based methods (HNSW, MIRAGE, SHG) achieve high query efficiency at the cost of substantial indexing time and memory consumption.
During indexing, they identify and connect near neighbors for each data point, while queries follow a gradually converging path through the graph. 
These methods perform better on large and simple datasets like Deep1M. 
However, on hard datasets such as GIST1M, their greedy search strategy is prone to becoming trapped in local optima, limiting query performance.
In contrast, TaCo employs a subspace collision probing strategy that offers greater robustness and fault tolerance. 
This allows it to perform consistently well even on hard datasets. 
IVF-RaBitQ leverages SIMD operations to accelerate computations on quantized codes, thus achieving QPS similar to graph-based methods; however, the limited number of quantization bits makes it challenging to attain very high recall.
In summary, TaCo achieves state-of-the-art indexing performance, while simultaneously attaining top-tier query performance, establishing it as a lightweight and effective solution for $k$-ANNS tasks.

\begin{figure*}[tb] 
	\centering
	\includegraphics[width=\linewidth]{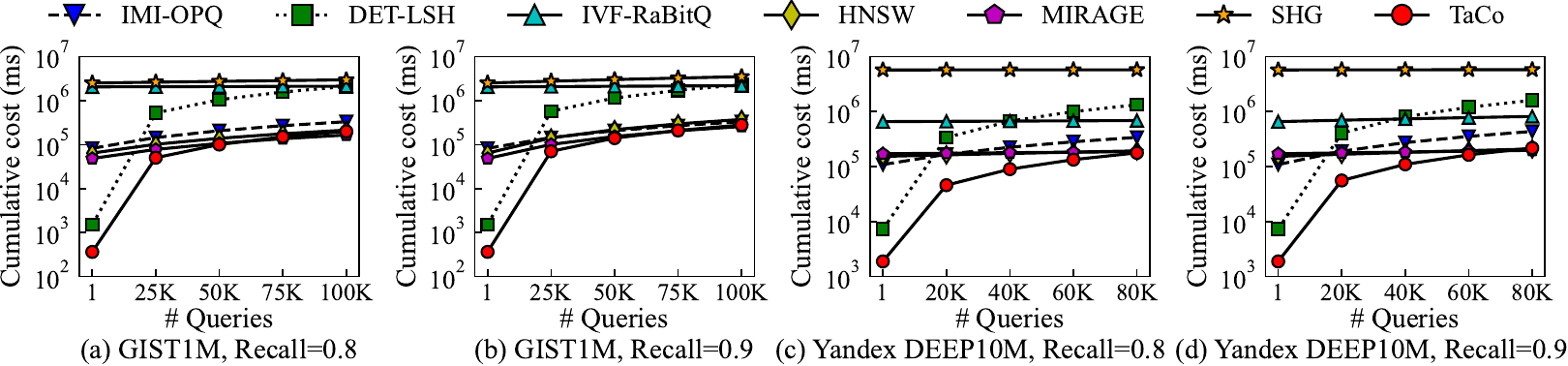}
 % \vspace*{-0.3cm}
	\caption{Cumulative query cost (start with the indexing time): TaCo vs. non-subspace collision-based methods.}
	\label{query_indexing}
 % \vspace*{-0.1cm}
\end{figure*}

\subsubsection{Overall Evaluation}

Given the considerable differences in indexing and query performance across different categories of ANNS methods, it is essential to conduct a comprehensive evaluation that integrates both indexing and query metrics.
Figure~\ref{query_indexing} shows the cumulative query costs of all methods under the same recall, where the cost starts with the indexing time.
The experimental results demonstrate the advantages of TaCo: it completes index construction and processes 50K-80K queries before the best competitor (i.e., HNSW and MIRAGE) answers its first query.
In contrast, the slow index construction of IVF-RaBitQ and SHG, as well as the slow query time of DET-LSH, make them less competitive in this evaluation. 
In summary, while different categories of methods are suited to particular scenarios due to their distinctive characteristics, TaCo achieves an effective balance between indexing and query efficiency, making it especially suitable for applications that require rapid deployment and immediate query responses, such as retrieval-based sparse attention for LLM inference acceleration~\cite{zhang2025pqcache,liu2024retrievalattentionacceleratinglongcontextllm}.

\section{Related Work} \label{related_work}

\textbf{Approximate Nearest Neighbor Search.} 
With data being generated at an unprecedented rate and datasets increasing in scale~\cite{zhuang2024cbcms,fu2024securing,fu2023ti,fei2023flexauth,zhang2025polaris,zhuang2026structured}, more efficient management of large-scale data is required to facilitate advanced analysis~\cite{DBLP:journals/sigmod/Palpanas15,Palpanas2019,li2024disauth,wei2023data,wei2025dominate}.
A wide range of effective ANNS methods have been proposed~\cite{zeyubulletin-sep23,DBLP:journals/debu/0007X0H0P024}: locality-sensitive hashing (LSH)-based methods~\cite{dblsh,lccslsh,pmlsh,andoni2015optimal,detlsh}, vector quantization (VQ)-based methods~\cite{jegou2010product,ge2013optimized,norouzi2013cartesian,babenko2014inverted,gao2024rabitq,gao2025practical}, tree-based methods~\cite{dpisaxjournal,annoy,coconut,messi,dumpy,wang2024dumpyos,seanetconf,leafi}, graph-based methods~\cite{li2019approximate,fu2019fast,malkov2018efficient,voruganti2025mirage,azizi2025graph,gong2025accelerating}, and hybrid methods~\cite{chen2018sptag,lshapg,azizi2023elpis,gou2025symphonyqg}. 
% Several comprehensive surveys are available to help readers gain a deeper understanding of the trade-offs and comparisons among different ANNS algorithms~\cite{echihabi2018lernaean,hydra2,azizi2025graph,zeyubulletin-sep23,li2019approximate,annbulletin,wang2021comprehensive}.
Subspace collision is a newly proposed framework, which achieves a better balance between index construction and query answering~\cite{wei2025subspace}.
Our work advances the subspace collision framework by introducing subspace-oriented data transformation and optimizing query strategies, thereby extending the boundaries of efficient and scalable ANNS.

\noindent \textbf{AI for ANNS and ANNS for AI.}
Traditional ANNS methods rely on heuristic designs, limiting their ability to adapt indexing and query strategies to data distributions or query characteristics.
In contrast, deep learning can capture intrinsic features that are difficult to manually design~\cite{lecun2015deep}, motivating a range of learning-enhanced ANNS approaches, including learn-to-index~\cite{zeng2025lira,gupta2022bliss,li2023learning,chiu2019learning}, learn-to-query~\cite{baranchuk2019learning,feng2023reinforcement}, learned early termination~\cite{li2020improving,chatzakis2025darth}, and learned cardinality estimation~\cite{sun2021learned,zheng2023learned}.
% Traditional ANNS methods are typically designed based on heuristics principles, which limits their ability to dynamically adapt indexing and query strategies according to data distribution or query characteristics.
% Deep learning techniques can capture intrinsic data features that are difficult for humans to interpret~\cite{lecun2015deep}. 
% As a result, they have been applied to tackle the above challenges in ANNS~\cite{li2019approximate}, giving rise to learning-enhanced approaches such as learn-to-index~\cite{zeng2025lira,gupta2022bliss,li2023learning,chiu2019learning}, learn-to-query~\cite{baranchuk2019learning,feng2023reinforcement}, learned early termination~\cite{li2020improving,chatzakis2025darth}, and learned cardinality estimation~\cite{sun2021learned,zheng2023learned}.
Furthermore, with advances in large language models (LLMs)~\cite{zhao2023survey}, ANNS has gained significant attention for enhancing AI applications~\cite{wei2026virtuouscycleaipoweredvector}. 
For instance, retrieval-augmented generation (RAG) uses ANNS to provide contextual information for LLM queries~\cite{lewis2020retrieval,fan2024survey,gao2023retrieval}, and key-value cache (KVCache) employs ANNS to accelerate model inference~\cite{zhang2025pqcache,liu2024retrievalattentionacceleratinglongcontextllm,desaihashattention,song2026csattention}.

\noindent \textbf{Hybrid Search.}
Traditional ANNS primarily focuses on vector search. 
However, how to achieve hybrid search by integrating ANNS with traditional database systems has become a key concern in both academia~\cite{gollapudi2023filtered,patel2024acorn,zuo2024serf,xu2024irangegraph} and industry~\cite{yang2022oceanbase,wang2021milvus,mohoney2023high,wei2020analyticdb}.
% (including Oceanbase~\cite{yang2022oceanbase,yang2023oceanbase}, Zilliz~\cite{wang2021milvus}, Apple~\cite{mohoney2023high}, and Alibaba~\cite{wei2020analyticdb}).
Hybrid search enhances traditional ANNS by enabling vectors to satisfy additional attribute-based constraints~\cite{wang2023efficient,wu2022hqann}, greatly expanding its applicability in real-world applications. 
For example, in e-commerce, users can specify attributes such as price range and product style while still leveraging semantic similarity search~\cite{cai2024navigating,xu2024irangegraph}.
In search engines, results can be filtered by domain or user privileges without compromising retrieval relevance~\cite{cai2024navigating}.
% Effectively leveraging this metadata for filtered search can greatly enhance retrieval accuracy and the overall performance of the RAG pipeline, thereby boosting the experience of generative AI applications.
In RAG systems, vector databases exploit metadata-based filtering to improve retrieval accuracy, thereby enhancing generative AI applications~\cite{fan2024survey,gao2023retrieval}.
This integration of vector-based retrieval and structured data filtering offers a practical paradigm for next-generation information retrieval systems.

\section{Conclusions}

In this paper, we first designed the subspace-oriented data transformation mechanism to enable data-adaptive subspace collision framework.
Next, we presented query-aware and scalable query strategies that dynamically allocate overhead for each query and accelerate collision probing within subspaces.
Then, we proposed the TaCo method, which achieves efficient and accurate ANN search while maintaining an excellent balance between indexing and query performance.
Finally, we conducted extensive experiments, and the results demonstrate the superiority of TaCo.
In future work, we will combine deep learning to explore learn-based index structures and query strategies under the subspace collision framework, as well as randomized numerical linear algebra techniques~\cite{derezinski2021Sparse,niu2025Fundamental}.

%%
%% The acknowledgments section is defined using the "acks" environment
%% (and NOT an unnumbered section). This ensures the proper
%% identification of the section in the article metadata, and the
%% consistent spelling of the heading.
\begin{acks}
T.~Palpanas supported by EU Horizon projects TwinODIS ($101160009$) and DataGEMS ($101188416$), and by $Y \Pi AI \Theta A$ \& NextGenerationEU project HARSH ($Y\Pi 3TA-0560901$) that is carried out within the framework of the National Recovery and Resilience Plan “Greece 2.0” with funding from the European Union – NextGenerationEU. 
Z.~Liao would like to acknowledge the National Natural Science Foundation of China (via fund NSFC-12571561) and the Fundamental Research Support Program of HUST (2025BRSXB0004) for providing partial support.
\end{acks}

%%
%% The next two lines define the bibliography style to be used, and
%% the bibliography file.
\bibliographystyle{ACM-Reference-Format}
\bibliography{ref}

\end{document}